\title{Reachability by Paths of Bounded Curvature in a Convex Polygon%%
  \thanks{O.C.~was supported by Mid-career Researcher Program
    through NRF grant funded by the~MEST (No.~R01-2008-000-11607-0).}}
\newcommand{\DD}{{\cal D}}
\newcommand{\vecs}{{\mathbf{s}}}
\newcommand{\dir}{{\mathbf{d}}}
\newcommand{\vecr}{{\mathbf{r}}}
\newcommand{\vech}{{\mathbf{h}}}
\newcommand{\vecq}{{\mathbf{q}}}
\newcommand{\RR}{{\mathbb{R}}}
\newcommand{\R}[1]{\RR^{#1}}
\newcommand{\ldisk}{{D_L}}
\newcommand{\rdisk}{{D_R}}
\newcommand{\bd}{\partial}
\newcommand{\thetastar}{\theta^{\ast}}
\newcommand{\before}{\preccurlyeq}
\DeclareMathOperator{\reach}{\textsc{reach}}
\DeclareMathOperator{\FIL}{\textsc{fil}}
\DeclareMathOperator{\BFIL}{\textsc{bfil}}
\DeclareMathOperator{\LFIL}{\textsc{lfil}}
\DeclareMathOperator{\fil}{Fil}
\DeclareMathOperator{\DA}{\textsc{da}}
\DeclareMathOperator{\LDA}{\textsc{lda}}
\DeclareMathOperator{\RDA}{\textsc{rda}}
\DeclareMathOperator{\FC}{\textsc{fc}}
\DeclareMathOperator{\conv}{\textsc{conv}}
\let\leq\leqslant
\let\geq\geqslant
\newtheorem{theorem}{Theorem}
\newtheorem{lemma}[theorem]{Lemma}
\newtheorem{prop}[theorem]{Proposition}
\newtheorem{observation}[theorem]{Observation}
\newcommand{\epsfigure}[2]{
  \begin{figure}[htb]
    \centerline{\includegraphics{bcr-#1}}
    \caption{#2}
    \label{f:#1}
  \end{figure}}
\def\section{\@startsection {section}{1}{\z@}{-3.5ex plus -1ex minus
-.2ex}{2.3ex plus .2ex}{\large\bf}}
\def\subsection{\@startsection{subsection}{2}{\z@}{-3.25ex plus -1ex
minus -.2ex}{1.5ex plus .2ex}{\normalsize\bf}}
\def\@fnsymbol#1{\ensuremath{\ifcase#1\or *\or 1\or 2\or
    3\or 4\or 5\or 6\or 7 \or 8\ or 9 \or 10\or 11 \else\@ctrerr\fi}}
\author{Hee-Kap Ahn%%
  \thanks{Department of Computer
    Science and Engineering, Pohang University of Science and Technology,
    San 31, Hyoja-dong, Nam-gu, Pohang, Korea. Email: heekap@postech.ac.kr.}
  \and
  Otfried Cheong%%
  \thanks{Department of Computer Science, KAIST, Gwahangno~335,
    Yuseong-gu, Daejeon, Korea. Email: otfried@kaist.edu.}
  \and 
  Ji\v{r}\'{\i} Matou\v{s}ek%%
  \thanks{Dept.~of Applied Mathematics and Institute of Theoretical
    Computer Science (ITI), Charles University,
    Malostransk\'{e} n\'{a}m. 25, 118~00~~Praha~1, Czech Republic.
  Email: matousek@kam.mff.cuni.cz.}  
  \and 
  Antoine Vigneron%%
  \thanks{INRA, UR341 Math\'ematiques et Informatique Appliqu\'ees, 
  Domaine de Vilvert, F-78352 Jouy-en-Josas cedex. 
  Email: antoine.vigneron@jouy.inra.fr.}}
\begin{document}

\maketitle

\begin{abstract}
  Let $B$ be a point robot moving in the plane, whose path is
  constrained to forward motions with curvature at most one, and let
  $P$ be a convex polygon with $n$ vertices.  Given a starting
  \emph{configuration} (a location and a direction of travel) for $B$
  inside $P$, we characterize the region of all points of $P$ that can
  be reached by~$B$, and show that it has complexity~$O(n)$.  We give
  an $O(n^2)$ time algorithm to compute this region.  We show that a
  point is reachable only if it can be reached by a path of type
  CCSCS, where C denotes a unit circle arc and S denotes a line
  segment.
\end{abstract}

\section{Introduction}

The problem of planning the motion of a robot subject to non-holonomic
constraints~\cite{l-rmp-91, ss-ampr-90} (for instance, bounds on
velocity or acceleration~\cite{crr-eakpp-91, dxcr-kmp-93,
  rs-mppmo-94}, bounds on the turning angle) has received
considerable attention in the robotics literature.  Theoretical
studies of non-holonomic motion planning are far sparser.

In this paper we consider a point robot in the plane whose turning
radius is constrained to be at least one and that is not allowed to
make reversals.  This restriction corresponds naturally to constraints
imposed by the steering mechanism found in car-like robots.  We assume
that the robot is located at a given position (and orientation) inside
a convex polygon, and we are interested in the set of points in the
polygon that can be reached by the robot.  We put no restriction on
the orientation with which the robot can reach a point.

% Previous work.
The lack of such a restriction distinguishes our work from most of the
previous theoretical work on curvature-constrained paths, which
usually assumes that not a point, but a \emph{configuration} (a
location with orientation) is given.  Dubins~\cite{d-cmlca-57} was
perhaps the first to study curvature-constrained shortest paths.  He
proved that a curvature-constrained shortest path from a given
starting configuration to a given final configuration consists of at
most three segments, each of which is either a straight line or an arc of
a unit-radius circle.  Reeds and Shepp~\cite{rs-opctg-90} extended
this characterization to robots that are allowed to make reversals.
Using ideas from control theory, Boissonnat et al.~\cite{bcl-spbcp-94}
gave an alternative proof for both cases, and Sussmann~\cite{s-sppcb-95}
extended the characterization to the 3-dimensional case.  

In the presence of obstacles, Fortune and Wilfong~\cite{fw-pcm-91}
gave a single-exponential decision procedure to verify if two given
configurations can be joined by a curvature-constrained path avoiding
the polygonal obstacles.  On the other hand, computing a
\emph{shortest} bounded-curvature path among polygonal obstacles is
NP-hard, as shown by Reif and Wang~\cite{rw-ctdcc-98}.
Wilfong~\cite{w-mpav-88} designed an exact algorithm for the case
where the curvature-constrained path is limited to some fixed straight
``lanes'' and circular arc turns between the lanes.  Agarwal et
al.~\cite{art-mpscr-95} considered the case of disjoint convex
obstacles whose curvature is also bounded by one, and gave efficient
approximation algorithms. Boissonnat and Lazard~\cite{bl-ptacs-03}
gave a polynomial-time algorithm for computing the exact shortest
paths for the case when the edges of the obstacles are circular arcs
of unit radius and straight line segments. Boissonnat et
al.~\cite{bgkl-accsp-02} gave an $O(n^4)$ algorithm for finding a
\emph{convex and simple} path of bounded curvature within a
\emph{simple polygon}.  Agarwal et al.~\cite{ablrsw-ccspc-02}
presented an $O(n^2\log n)$-time algorithm to compute a
curvature-constrained shortest path between two given configurations
inside a \emph{convex polygon}.  They also showed that there exists an
optimal path that consists of at most eight line segments or circular
arcs.  For general polygonal obstacles, Backer and
Kirkpatrick~\cite{bk-caasb-08} recently gave the first complete
approximation algorithm, improving on earlier work that approximated
the shortest ``robust'' path~\cite{jc-pspmr-92,aw-aaccs-01}.

At least two interesting problems have been studied where not
configurations but only locations for the robot are given.  The first
problem considers a \emph{sequence} of points in the plane, and asks
for the shortest curvature-constrained path that visits the points in
this sequence.  In the second problem, the \emph{Dubins traveling
  salesman problem}, the input is a \emph{set} of points in the plane,
and asks to find a shortest curvature-constrained path visiting all
points.  Both problems have been studied by researchers in the
robotics community, giving heuristics and experimental
results~\cite{sfb-opptspdv-05,mc-rhpdtsp-06,leny_07}. From a
theoretical perspective, Lee et al.~\cite{lcksc-accsp-00} gave a
linear-time, constant-factor approximation algorithm for the first
problem.  No approximation algorithms are known for the Dubins
traveling salesman problem.

Our result is a characterization of the region of points reachable by
paths under curvature constraints from a given starting configuration
inside a convex polygon~$P$.  We show that all points reachable from
the starting configuration are also reachable by paths of type CCSCS,
where C denotes an arc of a unit-radius circle and S denotes a line
segment. When $P$ has $n$ vertices, we show that the reachable region
has complexity $O(n)$, and we give an $O(n^2)$ time algorithm to
compute this region.

\section{Terminology and some lemmas}

Let $P$ be a convex polygon in the plane.  A \emph{configuration}
$\vecs = (s, \dir)$ is a point $s$ together with a direction of travel
$\dir$ (a unit vector).  By a \emph{path}, we mean a continuously
differentiable curve (the image of a $C^1$-mapping of $[0,1]$ to
$\R2$) with average curvature bounded by one in every positive-length
interval. Unless stated otherwise, we assume that a path is
completely contained in~$P$.  A \emph{configuration on the path $\pi$}
is a configuration $\vecs = (s,\dir)$ with $s$ on $\pi$ such that
$\dir$ is the forward tangent to $\pi$ in $s$.  The \emph{starting
  configuration} of $\pi$ is the starting point of $\pi$ with its
forward tangent.  A \emph{simple path} is a path with no
self-intersection; we allow the endpoints of a simple path to
coincide, in which case we call it a \emph{simple closed
  path}. (Hence, a simple closed path is smooth except possibly at one
point.)

Given a configuration $\vecs = (s, \dir)$, the \emph{left disk}
$\ldisk(\vecs)$ (\emph{right disk} $\rdisk(\vecs)$) is the unit disk
touching $s$ and completely contained in the left (right) halfplane
defined by the directed line through $\dir$. (All \emph{unit disks} in
this paper are \emph{unit-radius} disks.)

The \emph{left directly accessible region} $\LDA(\vecs)$ is the set of
all points in $P$ that can be reached by a path with starting
configuration $\vecs$ consisting of a single (possible zero-length)
circular arc on the boundary of $\ldisk(\vecs)$ followed by a single
(possible zero-length) line segment.  The \emph{right directly
  accessible region} is defined analogously.  The \emph{directly
  accessible region} $\DA(\vecs)$ is the union of the left and right
directly accessible region.

\paragraph{Pestov-Ionin lemma.}

The following lemma is perhaps the foundation for all our results.  In
a slightly less general form, it was proven by Pestov and
Ionin~\cite{pi-lpcigcc-59}.
Recently, it has been used for a curve 
reconstruction problem~\cite{gt-rcdc-05}. 
\begin{lemma}[Pestov-Ionin]\label{l:gPI}
  Any simple closed path contains a unit disk in its interior.
\end{lemma}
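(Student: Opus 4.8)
The plan is to prove the Pestov--Ionin lemma by a curvature/turning-number argument, establishing that a simple closed curve of curvature at most one must enclose a unit disk.

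First I would set up the geometry. Let $\pi$ be a simple closed path, bounding a Jordan domain $\Omega$. Parametrize $\pi$ by arc length in the counterclockwise direction, so that at each point the curve turns left at rate equal to its signed curvature $\kappa(t)$, with $|\kappa(t)|\le 1$. For each point $p$ on $\pi$, consider the osculating disk of radius $1$ lying locally on the \emph{inside} of $\pi$ (i.e. centered on the concave side, at distance $1$ from $p$ along the inward normal). Call this the inward unit disk at $p$. The key object is the set $U=\bigcup_{p\in\pi} D_p$ where $D_p$ is the inward unit disk at $p$; I want to show that some single unit disk sits entirely inside $\Omega$.

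The heart of the argument is a maximal-inscribed-disk / continuity argument. Let $r^\ast$ be the inradius of $\Omega$, i.e.\ the supremum of radii of disks contained in $\Omega$, and let $D$ be a largest inscribed disk, with center $c$ and radius $r=r^\ast$. Suppose for contradiction that $r<1$. A largest inscribed disk must touch $\bd\Omega=\pi$ in at least two points, or in one point where the curvature of $\pi$ (bending toward $c$) is at least $1/r>1$; the latter is impossible since $|\kappa|\le 1$, so $D$ touches $\pi$ in at least two points. Now I use that $\pi$ has curvature at most one: near each contact point $p_i$, the curve $\pi$ stays outside the inward unit disk $D_{p_i}$ that is tangent to $\pi$ at $p_i$ — more precisely, on the arc of $\pi$ near $p_i$, $\pi$ lies outside $D_{p_i}$ because bending is slower than that of a unit circle. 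Since $r<1$, the inscribed disk $D$ is strictly smaller than $D_{p_i}$ and tangent to it at $p_i$ from inside, so $D$ lies in the interior of $D_{p_i}$ except at $p_i$. Combining these observations along the whole curve, I would argue that $D$ can be pushed or grown: using the two (or more) contact points and the curvature bound, one shows the contacts cannot "cage" a subunit disk — either $D$ can be translated to eliminate a contact, or enlarged, contradicting maximality. This reduces to a convexity/sweeping lemma about disks of radius $<1$ pinned by arcs of curvature $\le 1$.

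The main obstacle I expect is exactly this caging step: ruling out the configuration in which several contact points with $\pi$ simultaneously block all translations and all expansions of a largest inscribed disk of radius $<1$. The clean way to handle it is to look at the contact points on the boundary circle of $D$; if all translations are blocked, the contact points (or outward normals at them) must positively span the plane, i.e.\ $c$ lies in the convex hull of the contact points. Then I would take two contact points $p_1,p_2$ whose arc of $\bd D$ between them (on the side toward the hull) subtends an angle $<\pi$, look at the sub-arc of $\pi$ joining them, and use the curvature-$\le 1$ bound to show this sub-arc, together with the short arc of $\bd D$, already bounds a region containing a disk of radius larger than $r$ — contradicting $r=r^\ast<1$. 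Once $r^\ast\ge 1$ is established, a disk of radius exactly $1$ fits inside $\Omega$, which is the assertion of the lemma. The remaining care is the smoothness at the single possible corner of a simple closed path: there the argument is applied on each side and the corner only helps (it bends the curve away from any inscribed disk), so it causes no trouble.
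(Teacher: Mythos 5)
Your overall strategy (take a largest inscribed disk of the Jordan domain, assume its radius $r<1$, and derive a contradiction from the curvature bound) is a reasonable starting point, but the proposal has a genuine gap exactly at the step you yourself flag as the obstacle. The contradiction you need is the claim that, for two contact points $p_1,p_2$ of the maximal disk $D$, the ``pocket'' bounded by the sub-arc of $\pi$ joining them and an arc of $\bd D$ contains a disk of radius larger than $r$. This claim is not a routine convexity/sweeping fact: it is essentially the relative form of the Pestov--Ionin lemma itself (it is the paper's Lemma~\ref{l:ggPI}, with the inscribed disk playing the role of $D$), and all the difficulty of the original statement is concentrated there. The local observations you do justify --- that near each contact point $\pi$ stays outside the inward unit tangent disk, and that $D$ is internally tangent to that unit disk --- only control $\pi$ on a short arc around each $p_i$; they do not prevent the curve inside the pocket from oscillating back toward $\bd D$ many times, so nothing in the proposal rules out that every pocket is ``thin'' and admits no disk larger than $r$. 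Phrases like ``one shows the contacts cannot cage a subunit disk'' and ``I would argue that $D$ can be pushed or grown'' are precisely the statements that need a proof mechanism, and none is supplied (there is also a minor slip: with exactly two contact points they are antipodal, so there is no arc subtending an angle $<\pi$, though this is easily repaired).

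For comparison, the paper supplies that missing mechanism by an induction on the length of the offending arc (Lemma~\ref{l:ggPI}): bisect the arc at its midpoint $m$, take the largest disk in the pocket tangent to the arc at $m$; if its radius is $\ge 1$ you are done, and otherwise it touches the arc at a second point $m'$, which isolates a sub-arc of at most half the length to which the induction hypothesis applies, with the base case (length $<\pi$) handled by the integrated curvature bound. Some such induction, compactness, or medial-axis argument is indispensable; to complete your proof you would have to prove your pocket claim by an argument of this kind, at which point you have effectively reproduced the paper's Lemma~\ref{l:ggPI}, and the detour through the globally largest inscribed disk becomes unnecessary (Lemma~\ref{l:gPI} follows from Lemma~\ref{l:ggPI} directly by letting the disk degenerate to a point). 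As it stands, the proposal reduces the lemma to an unproven statement of equivalent strength, so it is not yet a proof.
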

For sake of completeness, we sketch a proof analogous to
Pestov and Ionin's.
\begin{lemma}\label{l:ggPI}
  Let $D$ be a closed disk, and $\Gamma$ be a simple 
  path with endpoints $(a,b)$ such that $\Gamma \cap D=\{a,b\}$. 
  Then there is a unit disk touching~$\Gamma \setminus \{a,b\}$ that
  lies within the region $\cal R$ bounded by $\Gamma$ and the exterior
  arc of $\bd D$ 
%%%%%%%%%%%%%%%%%%%%%%%%%%%%%%%%%% FIG pestov.eps %%%%%%%%%%%%%%%%%
\epsfigure{pestov}{Illustration of Lemma~\ref{l:ggPI}. The 
  region $\mathcal{R}$ is shaded.}
%%%%%%%%%%%%%%%%%%%%%%%%%%%%%%%%%% FIG pestov.eps %%%%%%%%%%%%%%%%%
  (See Figure~\ref{f:pestov}).
\end{lemma}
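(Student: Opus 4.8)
The plan is to produce the required disk as a shrunken copy of the largest disk that lies in $\overline{\mathcal R}$ while touching $\Gamma$ at a point of $\Gamma\setminus\{a,b\}$; this plays the role of the largest inscribed disk in Pestov and Ionin's proof. Such disks exist — at an interior point $\Gamma(t)$ the curve is $C^1$ and avoids $D$, so a small disk tangent to it on the side of $\mathcal R$ fits in $\mathcal R$ — and by compactness the supremum $R^\ast$ of their radii is attained by a disk $B$ tangent to $\Gamma$ at some $p=\Gamma(t_0)$ with $t_0$ interior; one has to exclude the degenerate case in which a maximizing sequence of tangent points runs off to $a$ or $b$, using that $\mathcal R$ narrows to a wedge (or cusp) there. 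If $R^\ast\ge 1$, the unit disk $B'\subseteq B$ internally tangent to $\partial B$ at $p$ lies in $\overline{\mathcal R}$ and meets $\Gamma$ only at $p\notin\{a,b\}$, so it is the disk we want; the whole lemma thus reduces to showing $R^\ast\ge 1$.

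Suppose $R^\ast<1$. Since $B$ is maximal, growing it along the inward normal at $p$ must push it out of $\overline{\mathcal R}$, so $B$ meets $\partial\mathcal R=\Gamma\cup\alpha$ at a second point $z$, where $\alpha$ is the boundary arc of $\partial D$. Because $\Gamma$ has curvature at most $1$ while $\partial B$ has curvature $1/R^\ast>1$, a comparison of osculating circles shows $\Gamma$ cannot curl into $B$ near $p$, and $\Gamma\cap\operatorname{int}B=\emptyset$ globally since $\operatorname{int}B\subseteq\operatorname{int}\mathcal R$; this is the Pestov--Ionin ingredient. The idea is now to cut off a circular cap: if $z\in\Gamma$, the subarc $\beta$ of $\Gamma$ between $p$ and $z$ lies outside $B$, and together with the appropriate arc of $\partial B$ it bounds a region $\mathcal S$ with $B\subseteq\mathcal S\subseteq\mathcal R$ and $\operatorname{area}(\mathcal S)<\operatorname{area}(\mathcal R)$ (for instance $\operatorname{int}D\subseteq\mathcal R$ is not contained in $\mathcal S$). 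But $(\beta,B)$ again satisfies the hypotheses of the lemma — and this is exactly why the lemma is phrased with an auxiliary disk $D$ rather than only for simple closed paths — so induction on area applies and yields a unit disk inside $\mathcal S\subseteq\mathcal R$ touching $\beta\setminus\{p,z\}\subseteq\Gamma\setminus\{a,b\}$, i.e.\ a disk of radius $1>R^\ast$ tangent to $\Gamma$ at an interior point; this contradicts the choice of $R^\ast$. Finally, Lemma~\ref{l:gPI} follows by applying Lemma~\ref{l:ggPI} with $\Gamma$ the given simple closed path and $D$ any small disk it touches.

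I expect the real work, and the main obstacle, to lie in the inequality $R^\ast\ge1$ and the bookkeeping around it: showing that the second contact point can be taken on $\Gamma$ rather than on the circular arc $\alpha$ (contacts with $\alpha$, whose curvature is unconstrained, are the feature genuinely absent from the classical lemma, and the configuration in which $B$ is tangent to $\Gamma$ at a single interior point but blocked by $\alpha$ must be ruled out or reduced), verifying carefully that the cap $\mathcal S$ is a valid, strictly smaller instance so that the induction is well founded, and treating the endpoints $a,b$ where $\mathcal R$ may form a cusp rather than a wedge.
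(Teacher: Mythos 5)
Your overall architecture is the same as the paper's: grow a largest disk inside $\mathcal R$ tangent to $\Gamma$ at an interior point; if its radius is at least $1$ you are done, and otherwise use a second contact point to cut off a smaller instance $(\beta,B)$ of the same statement --- and you correctly identify that this recursion is precisely why the lemma is phrased with the auxiliary disk $D$. The genuine gap is in how the descent terminates. You appeal to ``induction on area'': area is a real-valued quantity, and strict decrease of a real quantity does not give a well-founded induction (a strictly decreasing sequence of areas need never reach a base case, and an area-minimal counterexample need not exist), and in fact you state no base case at all --- in your sketch the curvature bound only enters through the local non-crossing observation, which by itself can never force the recursion to bottom out. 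The paper resolves exactly this point: it fixes the tangency point to be the arclength \emph{midpoint} $m$ of $\Gamma$, so that the sub-arc $\Gamma'$ between $m$ and the second contact point has length at most $\ell/2$; this permits induction on the integer $\lfloor \ell/\pi\rfloor$, and the base case $\ell<\pi$ is where the curvature hypothesis does its real work (by integration, a unit disk tangent to an arc of length less than $\pi$ with curvature at most $1$ does not cross it, hence lies in $\mathcal R$). Your choice of tangency point --- maximizing the radius $R^\ast$ --- gives no control on the size of the recursive instance, so the argument cannot be repaired without changing that choice; note also that with the paper's fixed midpoint you never need $R^\ast$ to be attained, which removes the compactness and endpoint-cusp issues you gesture at.

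Two smaller points. The containment $B\subseteq\mathcal S$ is wrong as written: $\mathcal S$ is bounded by $\beta$ and the exterior arc of $\partial B$, so it is disjoint from the interior of $B$, exactly as $\mathcal R$ is disjoint from the interior of $D$; only $\mathcal S\subseteq\mathcal R$ is used. And the case you flag but leave open --- the second contact point lying on the arc $\alpha\subset\partial D$ rather than on $\Gamma$ --- is also the thinnest spot in the paper's own proof, which disposes of it with ``clearly $m'$ lies on $\Gamma$''; you are right that this configuration needs to be ruled out or reduced, but flagging it is not the same as resolving it.
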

\begin{proof}
  We proceed by induction on the length $\ell$ of $\Gamma$ (or, more
  precisely, by induction on $\lfloor \ell/\pi \rfloor$).
  
  When $\ell < \pi$, we can prove by integration that a unit disk
  tangent to $\Gamma$ does not cross $\Gamma$. We consider a unit disk
  $D_0$ tangent to $\Gamma$ at $m \notin \{a,b\}$ on the interior
  side.  Since $D_0 \setminus D$ has only one connected component and
  $m \in D_0 \setminus D$, clearly $D_{0}$ is contained in~$\cal R$.
  
  Otherwise, let $m$ be the point halfway between $a$ and $b$ on
  $\Gamma$. Let $D_1$ be the largest disk contained in $\cal R$ that
  is tangent to $\Gamma$ in $m$.  If the radius of $D_1$ is larger or
  equal to $1$, then we are done.  If not, $D_{1}$ must touch
  $\bd {\cal R}$ in another point~$m'$.  Clearly $m'$ lies on
  $\Gamma$, and the length of the arc $\Gamma'$ of $\Gamma$ between
  $m$ and $m'$ is at most $\ell/2$.  By the induction hypothesis, a
  unit disk $D_{2}$ lies inside the region ${\cal R}_{1}$ bounded by
  $\Gamma'$ and $D_{1}$. Since ${\cal R}_{1} \subseteq {\cal R}$, the
  lemma follows.
\end{proof}
Lemma~\ref{l:gPI} follows from Lemma~\ref{l:ggPI} by observing that it
still holds when $D$ degenerates to a point.

\paragraph{Filling.}

By $\FIL(P)$ we denote the \emph{set} of all unit-radius disks that
are completely contained in $P$, and we let $\fil(P)$ be the union of
all the disks in $\FIL(P)$. (See Figure~\ref{f:elp}.)  Both $\FIL(P)$
and $\fil(P)$ will be called the \emph{filling} of $P$.
%%%%%%%%%%%%%%%%%%%%%%%%%%%%%%%%%% FIG bcr-elp.eps %%%%%%%%%%%%%%%%%
\epsfigure{elp}{The filling $\fil(P)$ is shaded in light grey, and the 
three pockets are shaded in dark grey.}
%%%%%%%%%%%%%%%%%%%%%%%%%%%%%%%%%% FIG bcr-elp.eps %%%%%%%%%%%%%%%%%

\paragraph{Pockets.}

The connected components of $P\setminus \fil(P)$ are called the
\emph{pockets} of $P$.  A pocket of $P$ is bounded by a single
circular arc (lying on one disk of $\FIL(P)$) and a connected part of
the boundary of $P$. The first and last edge on this connected chain
are called the \emph{mouth edges} of the pocket.  Its extremities are
called the \emph{mouth points}.  The mouth edges form an angle smaller
than~$\pi$ (this is equivalent to observing that the mouth points lie
on the same disk of $\FIL(P)$ and form an angle smaller than
$\pi$)~\cite{ablrsw-ccspc-02}. Agarwal et al.~\cite{ablrsw-ccspc-02}
proved the following lemma.
%% I didn't find an earlier reference---0.

\begin{lemma}[Pocket lemma]\label{l:pocket}
  A path entering a pocket from $\fil(P)$ cannot leave the pocket
  anymore.
\end{lemma}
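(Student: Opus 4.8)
The plan is a proof by contradiction: assuming that some path $\pi$ enters a pocket $\Pi$ from $\fil(P)$ and later leaves it, I will exhibit a unit disk contained in the closure $\overline{\Pi}$ of the pocket. That is impossible, because $\overline{\Pi}\subseteq P$, so such a disk belongs to $\FIL(P)$ and hence lies in $\fil(P)$; but it has positive area while the boundary of $\Pi$ has zero area, so it must meet the pocket $\Pi$ itself, which is by definition disjoint from $\fil(P)$.

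First I would fix notation: let $D\in\FIL(P)$ be the disk whose boundary carries the mouth arc $\alpha$ of $\Pi$, so that $\partial\Pi=\alpha\cup C$ with $C$ a connected sub-chain of $\partial P$. Since $\pi$ stays inside $P$, and since a neighborhood (in $P$) of any interior point of the chain $C$ lies entirely in $\overline{\Pi}$ --- no unit disk of $\FIL(P)$ can contain a point that close to the interior of $C$ --- the path can move between $\overline{\Pi}$ and the rest of $P$ only by crossing $\alpha\subseteq\partial D$. Hence there is an entry point $a\in\alpha$; letting $b$ be the first point after $a$ at which $\pi$ touches $D$ again (one exists because by assumption $\pi$ eventually crosses $\alpha$ to leave), the subpath $\Gamma$ of $\pi$ from $a$ to $b$ lies in $\overline{\Pi}$, meets $D$ only in $\{a,b\}$, and has $b\in\alpha$.

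Next I would feed $\Gamma$ into the Pestov--Ionin machinery. If $\Gamma$ is not simple, it contains a simple closed subpath, and Lemma~\ref{l:gPI} puts a unit disk in the interior of that subpath; since a Jordan curve inside the topological disk $\overline{\Pi}$ bounds a region inside $\overline{\Pi}$, that disk lies in $\overline{\Pi}$. If $\Gamma$ is simple, I would apply Lemma~\ref{l:ggPI} to $\Gamma$ and the disk $D$: it yields a unit disk lying in the region $\mathcal R$ bounded by $\Gamma$ and the exterior arc of $\partial D$. Here the exterior arc is exactly the sub-arc of $\alpha$ between $a$ and $b$ --- because $\overline{\Pi}$ is a topological disk carrying $\alpha$ on its boundary and disjoint from the interior of $D$ --- so $\mathcal R\subseteq\overline{\Pi}$, and again the disk sits inside $\overline{\Pi}$. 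Either way the contradiction is reached.

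I expect the main obstacle to be the point-set bookkeeping rather than the differential geometry: namely (i) justifying that $\pi$ can enter or exit $\Pi$ only across the mouth arc $\alpha$, which rests on the structure of pockets (the interior of the chain $C$ is uncovered by $\FIL(P)$), and (ii) checking that the region $\mathcal R$ returned by Lemma~\ref{l:ggPI} is the one contained in $\overline{\Pi}$, i.e.\ that the relevant arc of $\partial D$ is the mouth sub-arc and not the complementary arc. The possible non-simplicity of $\Gamma$ is a minor wrinkle, handled by passing to a simple closed subpath and invoking Lemma~\ref{l:gPI}.
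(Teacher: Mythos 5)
Your proof is correct and follows essentially the same route as the paper: extract a subpath inside the pocket whose endpoints lie on the bounding disk $D$, apply Lemma~\ref{l:ggPI} to place a unit disk inside the pocket's closure, and contradict the fact that the pocket is disjoint from $\fil(P)$. The only difference is that you spell out the bookkeeping the paper leaves implicit---that entry and exit can only occur across the mouth arc, and the possible non-simplicity of the subpath (handled via Lemma~\ref{l:gPI})---which is fine but not a different argument.
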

\begin{proof}
  We consider a pocket $K$ bounded by a disk $D$. Suppose that there
  is a path $\Gamma$ that enters and leaves $K$.  There is a subpath
  $\Gamma'$ of $\Gamma$ whose endpoints lie on $D$ and whose other
  points are in the interior of $K$.  By Lemma~\ref{l:ggPI}, there is
  a unit disk $D'$ that touches $\Gamma'$ at a point in the interior
  of $K$, and is contained in $K \cup D$. Hence, $D' \in \FIL(P)$, and
  $D'$ intersects the interior of $K$, a contradiction.
\end{proof}

\paragraph{Reachability for a union of disks.}
 
For a set $\DD$ of unit disks, we let $\conv(\DD)$ denote the set of
all unit disks contained in the convex hull of $\bigcup \DD$.
Equivalently, $\conv(\DD)$ consists of the unit disks centered at
points of the convex hull of the set of all centers of the disks in
$\DD$.
\begin{observation}\label{obs:convd}
  \label{l:conv}
  Let $\DD$ be a set of unit disks in the plane. Then $\bigcap \DD =
  \bigcap\conv(\DD)$.
\end{observation}

Given a convex set $Q$, a \emph{configuration on the boundary} of $Q$
is a configuration $\vecs = (s,\dir)$ with $s$ on the boundary of $Q$
and such that $\dir$ is tangent to the boundary of $Q$ in~$s$.
\begin{lemma}\label{l:nonreach}
  Let $\DD$ be a set of unit disks, and let $\vecs$ be a starting
  configuration on the boundary of $\bigcup\conv(\DD)$. Then no point in
  the interior of $\bigcap \DD$ can be reached by a path starting at
  $\vecs$ and contained in $\bigcup\conv(\DD)$, or even in any convex
  polygon $P$ such that $\FIL(P)=\conv(\DD)$.
\end{lemma}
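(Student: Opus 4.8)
The plan is to argue by contradiction: suppose a path $\Gamma$ starting at $\vecs$ and staying inside $\bigcup\conv(\DD)$ (or inside $P$) reaches a point $p$ in the interior of $\bigcap\DD$. The key observation is that $\vecs$ is \emph{on the boundary} of $\bigcup\conv(\DD)$, so by the definition of a configuration on the boundary of a convex set, $\dir$ is tangent to $\partial(\bigcup\conv(\DD))$ at $s$, and one of the two unit disks touching $s$ with $\dir$ as tangent --- say $\rdisk(\vecs)$, after possibly reflecting --- lies \emph{outside} $\bigcup\conv(\DD)$, while $\ldisk(\vecs)\in\conv(\DD)=\FIL(P)$. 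I would first close up $\Gamma$ into a simple closed path. Walk along $\Gamma$ from $s$; since $\Gamma$ enters the interior of $\bigcap\DD$, it must leave the disk $\rdisk(\vecs)$ (whose interior is disjoint from $\bigcap\DD$ because $\rdisk(\vecs)\notin\FIL(P)$ forces its interior to miss some disk of $\DD$ --- actually more carefully: $\rdisk(\vecs)\not\subseteq P$). Let $\Gamma'$ be the initial subarc of $\Gamma$ from $s$ up to the first point $b$ where it returns to $\partial\rdisk(\vecs)$, or runs its full length; close it off with the arc of $\partial\rdisk(\vecs)$ that together with $\Gamma'$ bounds a region on the side \emph{away} from $\ldisk(\vecs)$.

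The heart of the argument is then an application of Lemma~\ref{l:ggPI} with $D=\rdisk(\vecs)$: the subarc $\Gamma'$ meets $\overline D$ exactly in its endpoints (by construction of $b$, and since $\Gamma'$ starts tangent to $D$ at $s$ we perturb/truncate slightly so that $s$ counts as a boundary contact), so there is a unit disk $D_2$ touching $\Gamma'\setminus\{s,b\}$ and lying in the region $\mathcal R$ bounded by $\Gamma'$ and the exterior arc of $\partial\rdisk(\vecs)$. Because $\Gamma\subseteq\bigcup\conv(\DD)$ and $\rdisk(\vecs)$ lies outside $\bigcup\conv(\DD)$ except for the single point $s$ of tangency, the region $\mathcal R$ --- which is wedged between $\Gamma'$ and the \emph{exterior} arc of $\rdisk(\vecs)$, i.e.\ on the side into which $\Gamma$ curves --- is contained in $\bigcup\conv(\DD)$. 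Hence $D_2\subseteq\bigcup\conv(\DD)$, so $D_2\in\FIL\bigl(\bigcup\conv(\DD)\bigr)$, which equals $\conv(\DD)$ by Observation~\ref{obs:convd} applied to a set whose disks fill $\bigcup\conv(\DD)$; alternatively, in the case of a general polygon $P$ with $\FIL(P)=\conv(\DD)$, we get $D_2\in\FIL(P)=\conv(\DD)$ directly. But $D_2$ touches an interior point of $\Gamma$, and if that touching point is chosen (by reparametrizing the argument, or by the intermediate value theorem along $\Gamma$ from $s$ to $p$) to lie near the crossing into $\bigcap\DD$, we obtain a disk of $\conv(\DD)$ whose center lies arbitrarily close to --- in fact, one can arrange inside --- the interior of $\bigcap\DD$, i.e.\ a disk of $\conv(\DD)$ whose interior intersects $\bigcap\DD$; since every disk of $\conv(\DD)$ must contain $\bigcap\DD=\bigcap\conv(\DD)$, placing such a disk's interior to meet $\mathrm{int}(\bigcap\DD)$ is fine, but the contradiction is extracted differently: $D_2$ lies in $\mathcal R$, which is disjoint from $\mathrm{int}(\rdisk(\vecs))$, yet $\mathrm{int}(\bigcap\DD)\subseteq\rdisk(\vecs)$ is false in general --- so I instead track that $p\in\mathrm{int}(\bigcap\DD)$ is separated from $\mathcal R$.

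Let me restate the contradiction cleanly, since that is the delicate point. Every disk $D'\in\conv(\DD)$ contains $\bigcap\conv(\DD)=\bigcap\DD\supseteq\mathrm{int}(\bigcap\DD)$. The configuration $\vecs$ sits on $\partial(\bigcup\conv(\DD))$ with $\ldisk(\vecs)\in\conv(\DD)$; thus $\ldisk(\vecs)\supseteq\bigcap\DD$, and since $s\in\partial\ldisk(\vecs)$ while $\mathrm{int}(\bigcap\DD)$ is in the interior of $\ldisk(\vecs)$, the point $s$ is \emph{outside} $\mathrm{int}(\bigcap\DD)$, indeed at distance $\ge$ something positive. Now $\Gamma$ goes from $s$ to $p\in\mathrm{int}(\bigcap\DD)\subseteq\mathrm{int}(\ldisk(\vecs))$. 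I claim $\Gamma$ cannot stay in $\mathrm{int}(\ldisk(\vecs))$ forever after leaving $s$ tangentially and then reaching $p$: this is just that a bounded-curvature path tangent to a unit circle from inside at $s$ \emph{can} re-enter the disk — so this is not immediate and is exactly where Lemma~\ref{l:ggPI} does the real work, via the disk $D_2$. So the main obstacle is the final bookkeeping: showing that the unit disk $D_2$ produced by Lemma~\ref{l:ggPI} \emph{cannot} be a member of $\conv(\DD)$, because $D_2\subseteq\mathcal R$ lies strictly outside $\ldisk(\vecs)$ on the far side, hence $D_2\not\supseteq\bigcap\DD$ (as $\bigcap\DD\subseteq\ldisk(\vecs)$ and $\mathcal R\cap\mathrm{int}(\ldisk(\vecs))=\emptyset$ once we note $\Gamma$'s relevant subarc, being reachable from the boundary, stays on the $\rdisk$ side). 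That contradicts $D_2\in\FIL(P)=\conv(\DD)$, every member of which must contain $\bigcap\conv(\DD)$. I expect the care needed in handling the tangency at $s$ (so that Lemma~\ref{l:ggPI}'s hypothesis $\Gamma\cap D=\{a,b\}$ is met, perhaps by truncating an initial circular-arc portion of $\Gamma$ that coincides with $\partial\rdisk(\vecs)$) and in pinning down which side $\mathcal R$ lies on to be the fussiest part of the write-up.
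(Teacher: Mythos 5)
There is a genuine gap, and it sits exactly where the paper's proof does its real work: turning the path into a closed curve to which the Pestov--Ionin machinery can be applied. You anchor everything to $D=\rdisk(\vecs)$ and truncate $\gamma$ at its ``first return'' to $\bd\rdisk(\vecs)$. But, as you note yourself, the tangent line at $s$ supports the convex set $\bigcup\conv(\DD)$, so $\rdisk(\vecs)\cap\bigcup\conv(\DD)=\{s\}$; a path confined to $\bigcup\conv(\DD)$ can therefore meet $\rdisk(\vecs)$ again only by returning exactly to the point $s$, and in general it never does. In that (generic) case your fallback ``or runs its full length'' leaves the second endpoint off $D$, the hypothesis $\Gamma\cap D=\{a,b\}$ of Lemma~\ref{l:ggPI} fails, and there is no region $\mathcal R$ at all. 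The idea you are missing is the paper's global closing-up construction: extend $\gamma$ forwards by a straight ray to infinity and backwards by a full loop along $\bd\bigl(\bigcup\conv(\DD)\bigr)$ (legitimate because that boundary has curvature at most one and is tangent to $\dir$ at $s$); the two extensions must cross, and tracing back from infinity to the first self-intersection produces a simple closed bounded-curvature loop $\gamma''$ contained in $\bigcup\conv(\DD)$ with $t$ on or outside it. Pestov--Ionin then yields a unit disk $D\in\conv(\DD)$ inside $\gamma''$; since every such disk contains $\bigcap\DD=\bigcap\conv(\DD)$, the point $t$ lies strictly inside $\gamma''$, the desired contradiction. The polygon case $\FIL(P)=\conv(\DD)$ is reduced to this one by the Pocket Lemma (the path cannot re-enter $\fil(P)$ after leaving it), a reduction your sketch also leaves untreated.

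Even in the non-generic case where the path does return to $\rdisk(\vecs)$, the bookkeeping does not close. To get $D_2\in\conv(\DD)$ you need $D_2\subseteq\bigcup\conv(\DD)$, but the region $\mathcal R$ of Lemma~\ref{l:ggPI} is bounded by the \emph{exterior} arc of $\bd\rdisk(\vecs)$ and hence contains points of $\rdisk(\vecs)$ itself, which lie outside $\bigcup\conv(\DD)$; so $\mathcal R\not\subseteq\bigcup\conv(\DD)$ and the membership $D_2\in\conv(\DD)$ is unjustified (the lemma could simply hand you $\rdisk(\vecs)$ back). Your proposed contradiction also oscillates between two incompatible readings --- $D_2$ must contain $\bigcap\DD$ because $D_2\in\conv(\DD)$, versus $D_2$ misses $\bigcap\DD$ because $\mathcal R$ avoids $\ldisk(\vecs)$ --- and the key claim $\mathcal R\cap\mathrm{int}\bigl(\ldisk(\vecs)\bigr)=\emptyset$ is asserted rather than proved; nothing prevents $\gamma$ from winding around $\ldisk(\vecs)$ before reaching $t$, in which case $\mathcal R$ does meet it. In short, a local application of Lemma~\ref{l:ggPI} at $\rdisk(\vecs)$ cannot replace the paper's construction of a simple closed loop that provably excludes $t$.
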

\begin{proof}
  Assume to the contrary that there is a path $\gamma$ with starting
  configuration $\vecs$ on the boundary of $\bigcup\conv(\DD)$ and
  ending point~$t$ in the interior of $\bigcap \DD$.  Assume for the
  moment that $\gamma$ lies completely in $\bigcup\conv(\DD)$.  
  
  We extend $\gamma$ to infinity using a straight ray, such that the
  extended path is still $C^{1}$.  We then extend $\gamma$ backwards,
  by attaching a single loop around the boundary of
  $\bigcup\conv(\DD)$ at $\vecs$. To summarize, the extended path
  $\gamma'$ starts at $\vecs$, makes a single loop around the boundary
  of $\bigcup\conv(\DD)$, then follows the original path $\gamma$, and
  finally escapes to infinity along a straight line.  We can now
  construct a simple, closed path $\gamma''$ as follows: starting at
  infinity, we follow $\gamma'$ \emph{backwards}, until we encounter
  the first intersection of $\gamma'$ with the part that we have
  already seen.  
  %%%%%%%%%%%%%%%%%%%%%%%%%%%%%%%%%% FIG bcr-udisks.eps %%%%%%%%%%%%%%%%%
  \epsfigure{udisks}{Proof of Lemma~\ref{l:nonreach}. The set $\DD$
    consists of three disks. The convex hull $\conv(\DD)$ is shaded in
    light grey, and $\bigcap \DD$ is in dark grey.  The path $\gamma$ from
    $s$ to $t$ is enlarged into the path $\gamma''$.}
  %%%%%%%%%%%%%%%%%%%%%%%%%%%%%%%%%% FIG bcr-udisks.eps %%%%%%%%%%%%%%%%%
  Such an intersection must exist since the two
  extensions intersect.  We define~$\gamma''$ to be the part of~$\gamma'$
  between these two self-intersection points.
  Observe that~$t$ lies either on or outside the closed loop~$\gamma''$.
  
  By the Pestov-Ionin Lemma, $\gamma''$ contains a unit-disk $D$. 
  Since $\gamma''$ is contained in $\bigcup\conv(\DD)$, we have $D \in
  \conv(\DD)$. Consequently, the interior of $\bigcap \DD$ lies in the
  interior of~$D$, a contradiction with the fact that~$t$ must lie
  on or outside~$\gamma''$.

  The lemma still holds when we allow the path to lie inside a convex
  polygon~$P$ with $\FIL(P) = \conv(\DD)$.  After all, by
  Lemma~\ref{l:pocket}, the path cannot return to $\fil(P)$ after it
  has left it.
\end{proof}

\paragraph{The characterization.}
Consider a starting configuration $\vecs$ on the boundary of the
filling $\fil(P)$.  It is easy to see that any point in $P$ not in
$\bigcap \FIL(P)$ can be reached by a path from $\vecs$.  On the other
hand, by Lemma~\ref{l:nonreach}, no point in $\bigcap \FIL(P)$
can be reached, and so we have a complete characterization of the
region reachable from $\vecs$ as the complement of $\bigcap \FIL(P)$.

If the starting configuration lies on the boundary of an arbitrary
unit disk contained in~$P$, the same characterization holds.  For
arbitrary starting configurations, however, the situation becomes far
more complicated.  There is the possibility that no path starting at $\vecs$
is tangent to the boundary of $\fil(P)$,
and the filling has no relation to the reachable region.

If there exists a path starting at $\vecs$ that is tangent to the
boundary, all points outside $\bigcap \FIL(P)$ are reachable, but it
is still possible that some points inside $\bigcap \FIL(P)$ are
reachable, for instance because they lie in the directly reachable
area $\DA(\vecs) = \LDA(\vecs)\cup\RDA(\vecs)$, or because
$\vecs$ lies in a pocket with additional maneuvering space (that we
would not have been able to exploit if starting inside $\fil(P)$ by
Lemma~\ref{l:pocket}). (See Figure~\ref{f:man}).
%%%%%%%%%%%%%%%%%%%%%%%%%%%%%%%%%% FIG bcr-lch.ipe %%%%%%%%%%%%%%%%%
\epsfigure{man}{Points in the grey region are reachable from $\vecs$,
but are neither in $\DA(\vecs)$ nor in the complement of $D = \bigcap
\FIL(P)$.}
%%%%%%%%%%%%%%%%%%%%%%%%%%%%%%%%%% FIG bcr-lch.ipe %%%%%%%%%%%%%%%%%

In the rest of this paper, we give a complete characterization of the
reachable region, for any starting configuration in~$P$.  Let us
denote the set of points $t \in P$ such that $t$ is reachable by a
path starting from a configuration~$\vecs$ by $\reach(\vecs)$.

\section{Paths starting along the boundary}
\label{s:side}

In this section, we assume that the starting configuration $\vecs$ is 
on the boundary
of $P$ (recall that this means also that the direction is tangent to
the boundary).  Without loss of generality, we also assume that the
direction of $\vecs$ is counterclockwise along the boundary, so that
points of $P$ are reached locally by a left turn from $\vecs$.  It
turns out that in this situation we can restrict ourselves to paths
containing no right-turning arcs.

The \emph{forward chain} $\FC(\vecs)$ is the longest subchain of the
boundary of $P$, that starts counterclockwise from $\vecs$, and 
that turns by an angle at most $\pi$. (See Figure~\ref{f:fchain}.)
%%%%%%%%%%%%%%%%%%%%%%%%%%%%%%%%%%%%%%%%%%%%%%%%%
\epsfigure{fchain}{The forward chain $\FC(\vecs)$.}
%%%%%%%%%%%%%%%%%%%%%%%%%%%%%%%%%%%%%%%%%%%%%%%%%
In other words, when $\vecs$
is directed vertically upward, this
chain contains all the edges of $P$ that are above its
interior, as well as the part of the edge that contains $s$ and
is above $s$, and, if there is one, the other vertical edge of
$P$.

If the forward chain intersects the interior of $\ldisk(\vecs)$, then we 
have the following simple description of the reachable region.
\begin{lemma}\label{lem:pocketreach}
  If the forward chain $\FC(\vecs)$ intersects the interior of $\ldisk(\vecs)$,
  then $\reach(\vecs) = \LDA(\vecs)$.
\end{lemma}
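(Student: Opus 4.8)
The plan is to establish the two inclusions $\LDA(\vecs) \subseteq \reach(\vecs)$ and $\reach(\vecs) \subseteq \LDA(\vecs)$ separately. The first is immediate: a directly accessible point is reachable by definition, so $\LDA(\vecs) \subseteq \reach(\vecs)$ always. The content of the lemma is the reverse inclusion, which must exploit the hypothesis that $\FC(\vecs)$ meets the interior of $\ldisk(\vecs)$. First I would fix coordinates so that $\vecs$ points vertically upward, and note that the left disk $\ldisk(\vecs)$ then sits immediately to the left of the starting point $s$, tucked against the boundary of $P$. Since $\vecs$ is directed counterclockwise along $\bd P$, any path $\pi$ starting at $\vecs$ initially hugs the boundary (or turns into the interior by a left turn), and its first circular arc, if left-turning, lies on $\bd\ldisk(\vecs)$.

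The key geometric step is to show that under the hypothesis, the starting configuration is effectively ``trapped'' against $\ldisk(\vecs)$, in the sense that any path from $\vecs$ must stay in the region directly accessible from $\vecs$. The intuition: because the forward chain $\FC(\vecs)$ — the part of $\bd P$ reachable by turning at most $\pi$ — already pokes into $\ldisk(\vecs)$, there is no room for the robot to perform the kind of maneuver (a right turn followed by a left turn, or wrapping around) that would be needed to reach a point of $P$ outside $\LDA(\vecs)$. Concretely, I would argue that $\ldisk(\vecs)$ together with the portion of $\bd P$ along $\FC(\vecs)$ bounds a sub-region $K$ of $P$ that behaves like a pocket: a path leaving this region would have to cross an arc of $\bd\ldisk(\vecs)$, and I would invoke Lemma~\ref{l:ggPI} (or the Pocket Lemma, Lemma~\ref{l:pocket}, applied after observing the relevant arc is part of $\bd\fil(P)$ locally) to derive that a unit disk would be forced to overlap $K$ in a contradictory way. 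The cleanest route is probably to show that $K \subseteq \ldisk(\vecs)$ (this is where the intersection hypothesis is used, perhaps together with the fact that $\FC(\vecs)$ turns by at most $\pi$), so that the robot cannot even fit a unit-radius turn of the ``wrong'' handedness; hence every path from $\vecs$ consists of a left arc on $\bd\ldisk(\vecs)$ followed by motion that, being confined to $K \subseteq \ldisk(\vecs)$, can only be a chord — i.e., exactly a CS path witnessing membership in $\LDA(\vecs)$.

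I expect the main obstacle to be the precise verification that the confining region $K$ really is contained in (or controlled by) $\ldisk(\vecs)$, and in particular ruling out paths that start by tracing along $\bd P$ for a while — possibly past where $\FC(\vecs)$ enters the disk — before turning inward. One must check that once the boundary re-enters $\ldisk(\vecs)$, the robot following the boundary is pushed into the disk's interior and cannot escape with a curvature-one path; this is a local argument comparing the curvature of $\bd P$ (zero along edges, a left turn at vertices) against the unit curvature bound, combined with convexity of $P$. A secondary technical point is handling the degenerate cases where the first arc has zero length or where $s$ itself lies very close to the re-entry point. Once the confinement is established, the identification of the reachable set with $\LDA(\vecs)$ is routine: any $C^1$ curcurvature-bounded path inside the convex region $\ldisk(\vecs)$ from a boundary configuration, when we only care about which points it visits, is dominated by the segment from the end of the initial boundary arc, and all such points lie in $\LDA(\vecs)$.
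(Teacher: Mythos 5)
Your opening plan (easy inclusion plus a pocket-style confinement argument for $\reach(\vecs)\subseteq\LDA(\vecs)$) points in the right direction, but the concrete route you single out as ``cleanest'' rests on a false geometric claim and would fail. You propose to show that the confining region $K$ bounded by $\ldisk(\vecs)$ and the portion of $\bd P$ along $\FC(\vecs)$ satisfies $K\subseteq\ldisk(\vecs)$, and to conclude that after its initial arc the robot is trapped inside the unit disk, so its motion ``can only be a chord.'' This cannot work: $\LDA(\vecs)$ is disjoint from the open disk $\ldisk(\vecs)$ (it consists of boundary arcs of that disk followed by tangent segments, which run all the way to $\bd P$), and it is typically far larger than the disk. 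For instance, take a long convex polygon whose width is strictly between $1$ and $2$, with $\vecs$ on the bottom edge: the forward chain (which contains the top edge) pierces the interior of $\ldisk(\vecs)$, yet $\LDA(\vecs)$ --- and hence $\reach(\vecs)$ --- sweeps out almost the entire polygon. Since $K$ must contain $\LDA(\vecs)$, the inclusion $K\subseteq\ldisk(\vecs)$ is false, and with it the conclusion that every reachable point is reached by a CS path confined to the disk. (Even in a situation where a path were confined to a unit disk, ``curvature at most one inside a unit disk'' does not force the path to be a straight chord, so that inference is also unsound; and the lemma only asserts equality of reachable \emph{sets}, not that all paths are CS.)

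The paper's proof is the argument you gesture at in your second paragraph and then abandon: $\LDA(\vecs)$ can be enlarged to a \emph{pocket of the left disk}, i.e.\ a region bounded by a single arc of $\bd\ldisk(\vecs)$ and a connected portion of $\bd P$, and then the Pocket Lemma (Lemma~\ref{l:pocket}, whose proof via Lemma~\ref{l:ggPI} applies with $\ldisk(\vecs)$ playing the role of the bounding disk) shows that a path starting at $\vecs$ can never leave this region, so nothing outside $\LDA(\vecs)$ is reachable. Your version of this step is also incomplete as stated: you say a forced unit disk ``would overlap $K$ in a contradictory way,'' but you never identify the contradiction (in the Pocket Lemma it comes from the pocket being a component of $P\setminus\fil(P)$; here one must argue with the disk $\ldisk(\vecs)$ and the hypothesis that $\FC(\vecs)$ meets its interior). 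Likewise, your worry about the robot tracing $\bd P$ past the point where the chain re-enters the disk is legitimate, but the proposed fix --- a local curvature comparison along the boundary --- does not address it; it is precisely the global Pestov--Ionin/pocket argument that rules out reaching those points.
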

\begin{proof}
  The left directly accessible region $\LDA(\vecs)$ can be enlarged 
  to a pocket of the
  left disk $\ldisk(\vecs)$. Thus, no point outside $\LDA(\vecs)$ is
  reachable by the Pocket Lemma (Lemma~\ref{l:pocket}).
\end{proof}
 
We denote by $\LFIL(\vecs)$ the set of disks contained in $P \cup
\ldisk(\vecs)$ that touch the forward chain.  Note that $\LFIL(\vecs)$
always contains the left disk $\ldisk(\vecs)$.  Let us remark that the
set of centers of the disks in $\LFIL(\vecs)$ need not be connected.
For example, Figure~\ref{f:nonc} shows a situation where
$\LFIL(\vecs)$ consists of just three disks; their centers are marked by
black dots (in general, any number of connected components is
possible).
%%%%%%%%%%%%%%%%%%%%%%%%%%%%%%%%%%%%%%%%%%%%%%%%%
\epsfigure{nonc}{Example where the left filling consists of just 3
  disks.} 
%%%%%%%%%%%%%%%%%%%%%%%%%%%%%%%%%%%%%%%%%%%%%%%%%
For a disk $F \in \LFIL(\vecs)$, we define $\LDA(F)$ as
$\LDA(\vecr)$, where $\vecr$ is the first configuration on $\FC(\vecs)$
touching~$F$ (by the above, this is well defined).  Note
that if $F \in \FIL(P)$, then $\LDA(F)$ is simply the complement of the
interior of~$F$.
We continue with a lemma on the reachable
points outside $\LDA(\vecs)$.
\begin{lemma}\label{l:tinDL}
  Let $P$ and $\vecs$ be as above, with $\ldisk(\vecs)\notin
  \FIL(P)$, and the forward chain $\FC(\vecs)$ not intersecting the
  interior of $\ldisk(\vecs)$.  
  Suppose that there is a path $\gamma$ from $\vecs$ to
  $t$, where $t \notin \LDA(\vecs)$. Then there exists a disk
  $F$ such that $F \in \FIL(P)$ and $t$ is not in the interior of $F$, 
  or there
  exists a disk $F \in \LFIL(\vecs)$  such that $t \in \LDA(F)$.
\end{lemma}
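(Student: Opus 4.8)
The plan is to run the same "enlarge the path into a simple closed curve, then invoke Pestov--Ionin" argument that powered Lemma~\ref{l:nonreach}, but now applied inside the enlarged region $P\cup\ldisk(\vecs)$ rather than inside $\fil(P)$. Since $t\notin\LDA(\vecs)$, the path $\gamma$ does something nontrivial: it cannot consist of a single left arc on $\bd\ldisk(\vecs)$ followed by a segment. First I would extend $\gamma$ forward to infinity by a straight ray (keeping $C^1$), and extend it backward by prepending the arc of $\bd\ldisk(\vecs)$ from $\vecs$ run in the appropriate direction — this is legitimate because $\ldisk(\vecs)\subseteq P\cup\ldisk(\vecs)$ and the junction at $\vecs$ is smooth. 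Call this doubly-infinite curve $\gamma'$. Walking in from $+\infty$ along $\gamma'$ backward until the first self-crossing with the part already traced yields a simple closed curve $\gamma''$ lying inside $P\cup\ldisk(\vecs)$, and, as in Lemma~\ref{l:nonreach}, the target point $t$ lies on or outside $\gamma''$.

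By the Pestov--Ionin Lemma (Lemma~\ref{l:gPI}), $\gamma''$ contains a unit disk $D$ in its interior; moreover $D\subseteq P\cup\ldisk(\vecs)$. Now I would split into two cases according to whether $D$ touches the forward chain $\FC(\vecs)$ or not. If $D$ does not touch $\FC(\vecs)$, I claim $D\in\FIL(P)$: a unit disk inside $P\cup\ldisk(\vecs)$ that avoids the forward chain must in fact lie inside $P$, since the only part of $P\cup\ldisk(\vecs)$ sticking out of $P$ is the portion of $\ldisk(\vecs)$ beyond the boundary of $P$ near $s$, and reaching that portion forces the disk to meet $\FC(\vecs)$ (here I use that $\FC(\vecs)$ does not intersect the interior of $\ldisk(\vecs)$, together with the geometry of the forward chain as "the boundary above $s$"). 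Since $t$ lies on or outside $\gamma''$ and $D$ is inside $\gamma''$, we get $t\notin\operatorname{int}D$, giving the first alternative. If instead $D$ does touch $\FC(\vecs)$, then by definition $D\in\LFIL(\vecs)$, and it remains to show $t\in\LDA(D)$. For this I would use that $\gamma''$ separates $D$ from $t$: the portion of $\gamma''$ coming from the original path $\gamma$ together with the tail ray forms a curve that, relative to the configuration $\vecr$ where $\FC(\vecs)$ first touches $D$, realizes $t$ as a point reachable by a left-arc-then-segment from $\vecr$ — i.e.\ $t\in\LDA(\vecr)=\LDA(D)$; intuitively, $t$ being outside the loop $\gamma''$ that hugs $D$ forces $t$ out of the "forbidden pocket" of $D$, which is exactly the complement of $\LDA(D)$.

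The main obstacle is the second case: pinning down exactly why $D$ touching $\FC(\vecs)$, combined with $t$ lying outside $\gamma''$, forces $t\in\LDA(D)$ rather than merely $t\notin\operatorname{int}D$. This needs a careful look at how $\LDA(F)$ enlarges to a pocket-like region of $F$ (as in the proof of Lemma~\ref{lem:pocketreach}), so that "outside the loop around $D$" translates into "outside the bad region of $D$," i.e.\ into $\LDA(D)$; one has to check the orientation of the loop and that the piece of $\gamma''$ attributable to $\gamma$ together with the forward-chain contact is consistent with a left turn. The first case is comparatively routine once the claim "$D$ inside $P\cup\ldisk(\vecs)$ and missing $\FC(\vecs)$ implies $D\subseteq P$" is established, which itself reduces to elementary convexity together with the hypothesis $\ldisk(\vecs)\notin\FIL(P)$ and $\FC(\vecs)\cap\operatorname{int}\ldisk(\vecs)=\emptyset$.
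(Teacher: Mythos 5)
Your high-level plan (apply Pestov--Ionin to an enlargement of $\gamma$ inside $P\cup\ldisk(\vecs)$) is indeed the paper's starting point, but two of your steps fail as stated. First, the dichotomy you use to conclude is wrong: it is not true that a unit disk $D\subseteq P\cup\ldisk(\vecs)$ avoiding $\FC(\vecs)$ must lie in $P$. Since $\ldisk(\vecs)\notin\FIL(P)$ while $\FC(\vecs)$ misses its interior, $\ldisk(\vecs)$ pokes out of $P$ across the part of $\bd P$ \emph{behind} $s$, not on the forward chain; a disk such as $\ldisk(\vecs)$ translated slightly in the direction $\dir$ can still use that overhang, stay inside $P\cup\ldisk(\vecs)$, and yet stay clear of the forward chain whenever the chain is not close above. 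So the Pestov--Ionin disk $D$ may be neither in $\FIL(P)$ nor in $\LFIL(\vecs)$. The paper's key idea at this point, which your argument lacks, is a translation step: when $D\notin\FIL(P)$, translate $D$ in direction $\dir$ until it first touches $\FC(\vecs)$; the translated disk $F$ lies in $\LFIL(\vecs)$, and one then argues $t\in\LDA(F)$.

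Second, your single uniform construction (forward ray at $t$, backward extension by $\bd\ldisk(\vecs)$) is not guaranteed to self-intersect: when $t$ lies outside $\ldisk(\vecs)$, the ray from $t$ may leave $P$ without ever meeting the circle or the earlier part of $\gamma$, so no loop is produced. This is exactly why the paper splits into the cases $t\in\operatorname{int}\ldisk(\vecs)$ and $t\notin\operatorname{int}\ldisk(\vecs)$: in the latter it builds the loop differently, tracing backward from the first point $t'$ where $\gamma$ reaches the component of $P\setminus\ldisk(\vecs)$ containing $t$ and closing up along the lower semicircle of $\ldisk(\vecs)$; and even in the former it distinguishes the direction in which $\gamma$ crosses the line $st$ at $t$, precisely in order to justify the conclusion $t\in\LDA(F)$ --- the step you yourself flag as the ``main obstacle'' and leave open. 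So the proposal is incomplete in exactly the places where the real work lies: guaranteeing the loop exists for every position of $t$, replacing the false dichotomy by the translation to a disk of $\LFIL(\vecs)$, and verifying that $t$ lands in the corresponding $\LDA$.
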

\begin{proof}
  Without loss of generality, we assume that $\vecs$ is directed
  vertically upward. We first assume that $t$ is in the
  interior of $\ldisk(\vecs)$.
 
  Consider the line~$st$. Let us first assume that $\gamma$ intersects
  this line top-to-bottom (or tangentially) at~$t$. In that case, we
  extend $\gamma$ forward by the semi-infinite ray starting at $t$,
  and backward by the boundary of~$\ldisk(\vecs)$.  We trace the
  resulting path backwards 
  from infinity, stopping at the first intersection of the path with
  the part we have already seen, and thus forming a loop that does not
  contain~$t$ in its interior.  We apply the Pestov-Ionin lemma to
  this loop, and find a unit-radius disk $D$ contained in it. (See
  Figure~\ref{f:proof1}.)
  %%%%%%%%%%%%%%%%%%%%%%%%%%%%%%%%%%%%%%%%%%%%%%%%%
  \epsfigure{proof1}{(left) A unit disk $D$ lies in the shaded area.
  (right) $t$ is in $\LDA(F)$.}
  %%%%%%%%%%%%%%%%%%%%%%%%%%%%%%%%%%%%%%%%%%%%%%%%%
  If $D \in \FIL(P)$, then we are done. Otherwise, note that the loop
  does not cross the boundary of $P \cup \ldisk(\vecs)$, so
  $D \in P \cup \ldisk(\vecs)$. Then we obtain a disk $F \in
  \LFIL(\vecs)$ by translating $D$ upwards until it touches the
  forward chain $\FC(\vecs)$,  and we have $t \in \LDA(F)$.

  Now we consider the case where $\gamma$ intersects the line~$st$
  bottom-to-top at~$t$. Let $t'$ be the first point of intersection of
  $\gamma$ and the line~$st$ (along the path~$\gamma$).  If $t'$ lies
  between $s$ and $t$ on the line~$st$, then we can apply the argument
  above to conclude the existence of a disk $F$ such that $t' \in
  \LDA(F)$---but then also $t \in \LDA(F)$.

  If, finally, $t$ lies between $s$ and $t'$ on~$st$, then we extend
  $\gamma$ by the semi-infinite ray starting in $t$.  This ray must
  intersect the part of $\gamma$ from $s$ to $t'$, and so again we
  have found a loop lying in~$P$, fulfilling the requirements of the
  Pestov-Ionin lemma, and containing~$t$.  As above, there is then a
  disk $F\in \FIL(P)$ with $t\in \LDA(F)$.

  We now consider the case where $t$ does not lie in the interior of 
  $\ldisk(\vecs)$.  Then $t$ lies in a connected component $P_c$ of 
  $P \setminus \ldisk(\vecs)$ 
  different from $\LDA(\vecs)$. Since $\ldisk(\vecs)$ does not intersect the
  forward chain, $P_c$ lies entirely below $\ldisk(\vecs)$. Let $t'$ denote the
  first point on $\gamma$ that is on the boundary of $P_c$. We trace
  backward from $t'$ a path along $\gamma$, and then along the lower semi-circle
  of $\ldisk(\vecs)$, until we reach a point that we have already seen. It forms
  a loop on which we apply the Pestov-Ionin lemma. Thus we find a
  disk $D$ inside $P \cup \ldisk(\vecs)$ that does not contain $\{t,t'\}$. 
  If  $D \in \FIL(P)$, then we choose $F=D$ and we are done. 
  Otherwise, we obtain $F$ by translating $D$ upward until it
  meets the forward chain, and we have $t \in \LDA(F)$.
\end{proof}

The lemma above does not give us a complete characterization of the
reachable region, as we do not know yet whether we can reach the disks
in $\FIL(P)$ and $\LFIL(P)$ tangentially. The following lemma
addresses this issue.

\begin{lemma}\label{lem:reachfilling}
Assume that the forward chain $\FC(\vecs)$ does not intersect the interior of
$\ldisk(\vecs)$. (i) If $F \in \FIL(P)$, then there exists 
a counterclockwise configuration tangent to $F$ that can be reached 
from $\vecs$ by a $CS$ path. (ii) If $F \in \LFIL(P)$, then the first 
configuration on $\FC(\vecs)$ tangent to $F$ can be reached
from $\vecs$ by a $CSC$ path.
\end{lemma}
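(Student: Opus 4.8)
The plan is to prove the two parts separately, in each case constructing the required path explicitly and then arguing that it stays inside $P$ by invoking the Pestov-Ionin machinery (Lemmas~\ref{l:gPI}--\ref{l:ggPI}) to rule out the only way it could fail, namely by crossing $\bd P$. I would set up coordinates with $\vecs$ directed vertically upward, as in the proof of Lemma~\ref{l:tinDL}, so that ``left turn'' means counterclockwise and the forward chain $\FC(\vecs)$ lies above the interior of $P$.

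For part (i), let $F \in \FIL(P)$. The natural candidate is the following $CS$ path: follow the boundary of $\ldisk(\vecs)$ counterclockwise from $\vecs$ for some arc, then leave along a line segment that is tangent to $\ldisk(\vecs)$ and also tangent to $F$ (an inner or outer common tangent, chosen so the turning sense is consistent with a left turn and so the tangency to $F$ is counterclockwise). Such a common tangent line exists because $F$ and $\ldisk(\vecs)$ are unit disks, hence have a common tangent whose turning direction matches; the subtlety is choosing the correct one of the several common tangents so that (a) the arc length on $\ldisk(\vecs)$ is at most the arc available before $\ldisk(\vecs)$ would exit $P$, and (b) the direction at the point of tangency with $F$ is counterclockwise with respect to $F$. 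Once the combinatorial choice is fixed, the only obstruction to the path lying in $P$ is that the segment, or the arc, crosses $\bd P$ before reaching $F$. I would argue this cannot happen: $\ldisk(\vecs) \subseteq P$ fails in general, but the relevant arc — the portion of $\bd\ldisk(\vecs)$ we actually traverse — can be taken inside $P$ because $\ldisk(\vecs) \notin \FIL(P)$ does not preclude a short initial arc staying inside, and if even that fails we are in the regime of Lemma~\ref{lem:pocketreach}, which the hypothesis of the present lemma excludes since $\FC(\vecs)$ does not meet the interior of $\ldisk(\vecs)$. For the segment: if it crossed $\bd P$ and came back, the segment together with the crossed boundary chain would enclose a region, and since $F \subseteq P$ lies on the far side, a Pestov-Ionin / pocket argument (the segment plus a piece of $\bd P$ bounds a region that, by Lemma~\ref{l:ggPI} applied with $D = F$, would have to contain a unit disk inside $P$ touching the segment, forcing that disk to separate $F$ from the reachable side) yields a contradiction. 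This last step is where I expect the real work: making precise ``the segment cannot be blocked from $F$ by $\bd P$'' via the filling, analogous to how Lemma~\ref{l:nonreach} is proved.

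For part (ii), let $F \in \LFIL(P)$, so $F$ is a unit disk in $P \cup \ldisk(\vecs)$ touching the forward chain; let $\vecr$ be the first configuration on $\FC(\vecs)$ tangent to $F$. I want a $CSC$ path from $\vecs$ to $\vecr$. The construction: first an arc on $\bd\ldisk(\vecs)$ from $\vecs$, then a line segment tangent to both $\ldisk(\vecs)$ and $F$, then an arc on $\bd F$ ending at $\vecr$. Again the common-tangent line between the two unit disks $\ldisk(\vecs)$ and $F$ exists; I choose the tangent and the two arcs so that all three pieces turn consistently (both $C$'s left-turning, or the middle segment bridging an inner tangent — I would check which case actually arises from the geometry, presumably both arcs are counterclockwise since $\vecr$ is a counterclockwise configuration on $\FC(\vecs)$). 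Containment in $P \cup \ldisk(\vecs)$: the two arcs lie on $\bd\ldisk(\vecs)$ and $\bd F$ respectively, both of which are subsets of $P \cup \ldisk(\vecs)$ by definition of $\LFIL$; the segment is the only piece that could stray, and it is handled exactly as in part (i), except the ambient region is now $P \cup \ldisk(\vecs)$ rather than $P$. Since the final configuration $\vecr$ lies on $\FC(\vecs) \subseteq \bd P$, and the path reaches it from inside, this already certifies reachability of $\vecr$ — but one should also note $P \cup \ldisk(\vecs)$ is not $P$, so I must additionally observe that the portion of the path inside $\ldisk(\vecs) \setminus P$ is only the initial arc (which is fine, the robot may start on $\bd P$ and the problem allows paths in $P$; here I would double-check the paper's convention, since Lemma~\ref{l:tinDL} already treated $\ldisk(\vecs)$-excursions, so the same convention applies). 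The main obstacle in both parts is the same: verifying that the straight segment of the Dubins-type path is not blocked by the polygon boundary, and the tool for that is Lemma~\ref{l:ggPI} together with the pocket lemma, used to show that any blocking chain of $\bd P$ would force a unit disk into $\FIL(P)$ separating the start from $F$, contradicting $F \in \FIL(P)$ (resp. $F \in \LFIL(P)$).
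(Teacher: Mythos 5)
Your plan (fix a common tangent of $\ldisk(\vecs)$ and $F$, then verify containment in $P$ by a Pestov--Ionin ``blocking'' argument) does not address the actual difficulty, and the step where you do try to address it has a genuine gap. Since $P$ is convex and $F\subseteq P$, a straight segment ending in $F$ cannot leave $P$ and come back, so the blocking scenario you propose to exclude with Lemma~\ref{l:ggPI} is not where the proof can fail; your sketch of that exclusion (applying Lemma~\ref{l:ggPI} with $D=F$ to produce a ``separating'' disk) is in any case not a coherent argument. The real issue is on the other side of the path: which common tangent to take, why the tangency at $F$ has the required counterclockwise orientation, and above all why the arc of $\ldisk(\vecs)$ from $\vecs$ to the chosen tangent configuration stays in $P$. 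Recall that $\ldisk(\vecs)\notin\FIL(P)$ is the typical case here, and the hypothesis only says the \emph{forward chain} misses the interior of $\ldisk(\vecs)$; the rest of $\bd P$ may cut into $\ldisk(\vecs)$ (cf.\ Lemma~\ref{l:tinDL} and Figure~\ref{f:man}), so a sufficiently long counterclockwise arc does leave $P$. Your remarks that ``a short initial arc stays inside'' and that otherwise ``we are in the regime of Lemma~\ref{lem:pocketreach}'' do not engage with this: the arc you need is dictated by the tangent you chose, and Lemma~\ref{lem:pocketreach}'s hypothesis concerns only the forward chain. For part~(ii) you additionally leave unresolved whether the final arc on $F$ lies in $P$ and ends at the \emph{first} tangent configuration on $\FC(\vecs)$, which is exactly what the statement asserts.

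The paper's proof avoids all of this with a sweep/continuity argument that your proposal has no counterpart to. The key observation is that the leftmost point $p$ of $F$ lies in $\LDA(\vecs)$, because the vertical segment from $p$ up to the forward chain misses $\ldisk(\vecs)$; then one rotates a ray tangent to $\ldisk(\vecs)$, starting at $\vecs$ and moving the tangent configuration counterclockwise, so that it sweeps $\LDA(\vecs)$. Since the sweep eventually reaches $p$, there is a first moment at which the ray meets $F$, and at that moment it is tangent to $F$ with the correct orientation -- and the whole arc-plus-segment path is inside $P$ because the sweep never leaves $\LDA(\vecs)$. This single argument delivers tangency, orientation, and containment simultaneously, which is precisely the package your construction-then-verify approach would still have to supply. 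Part~(ii) is then the same sweep using $r\in\LDA(\vecs)$ for the first tangent configuration $\vecr$ on $\FC(\vecs)$, producing a tangent configuration $\vecr'$ on $F$ and appending the arc of $F$ from $\vecr'$ to $\vecr$, which lies in $P$. To repair your proposal you would essentially have to import these two membership facts and the first-contact argument, at which point you have the paper's proof.
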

\begin{proof}
  The lemma is obvious when $F$ touches the edge containing $s$, so we
  assume this is not the case.  We first prove~(i). If we draw a line
  segment upward from the leftmost point~$p$ of~$F$ until we meet the
  forward chain, we do not intersect~$\ldisk(\vecs)$. It follows
  that~$p \in \LDA(\vecs)$.  We consider a ray that starts at a
  configuration~$\vecs'$ tangent to~$\ldisk(\vecs)$. We start at
  $\vecs'=\vecs$ and move $\vecs'$ counterclockwise along the boundary
  of $\ldisk(\vecs)$, so that the ray sweeps~$\LDA(\vecs)$. Since $p
  \in \LDA(\vecs)$, this ray must meet~$F$ at some point.  When the
  ray first meets~$F$, it is tangent to $F$, so we can reach the
  corresponding configuration by a CS~path.

  We now prove~(ii). We denote by $\vecr=(r,\dir)$ the first
  configuration on the forward chain that is tangent to~$F$. Then $r
  \in \LDA(\vecs)$, so when we sweep the same ray as in the proof
  of~(i), we meet~$F$ tangentially at some configuration~$\vecr'$. The
  arc between~$\vecr'$ and~$\vecr$ is inside $P$, so we have a
  CSC~path starting from $\vecs$ and going through $\vecr'$
  and~$\vecr$.
\end{proof}

We are now able to give the following characterization for
the reachable region starting from a configuration on the
side of $P$. It follows directly from Lemmas~\ref{lem:pocketreach},
\ref{l:tinDL}, and~\ref{lem:reachfilling}.
\begin{prop}\label{p:reachside} 
Assume that $\vecs$ is a configuration on the boundary of $P$,
oriented counterclockwise.
Then any point in $\reach(\vecs)$ can be reached by a CSCS path.
In addition, we have that:
\begin{itemize}
\item[(i)] If $\FC(\vecs)$ intersects the interior of $\ldisk(\vecs)$,
  then $\reach(\vecs) = \LDA(\vecs)$.
\item[(ii)] If $\FC(\vecs)$ does not intersects the interior of
  $\ldisk(\vecs)$,  then  $\reach(\vecs)=\bigcup_{F \in \FIL(P) \cup
  \LFIL(\vecs)} \LDA(F)$. 
\end{itemize}
\end{prop}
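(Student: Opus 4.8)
The plan is to assemble the proposition directly from the three lemmas that precede it, treating the two cases of the dichotomy separately and, in parallel, keeping track of the path type so as to establish the CSCS claim. Case~(i) is immediate: if $\FC(\vecs)$ meets the interior of $\ldisk(\vecs)$, then Lemma~\ref{lem:pocketreach} gives $\reach(\vecs)=\LDA(\vecs)$, and every point of $\LDA(\vecs)$ is reached by a single arc on $\bd\ldisk(\vecs)$ followed by a segment, i.e.\ a $CS$ path, which is a (degenerate) CSCS path. The remaining work is all in case~(ii), where $\FC(\vecs)$ does not intersect the interior of $\ldisk(\vecs)$; the degenerate subcase $\ldisk(\vecs)\in\FIL(P)$ should be disposed of first, since then $\ldisk(\vecs)$ is itself a disk of $\FIL(P)$, $\LDA(\vecs)$ is the complement of its interior, and the characterization collapses to the ``boundary of the filling'' case already discussed in the text.

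For case~(ii) proper, I would prove the two inclusions of the identity $\reach(\vecs)=\bigcup_{F\in\FIL(P)\cup\LFIL(\vecs)}\LDA(F)$ separately. For ``$\subseteq$'': take $t\in\reach(\vecs)$, witnessed by a path $\gamma$ from $\vecs$ to $t$. If $t\in\LDA(\vecs)$ we are done, since $\ldisk(\vecs)\in\LFIL(\vecs)$ and $\LDA(\vecs)=\LDA(\ldisk(\vecs))$. Otherwise $t\notin\LDA(\vecs)$, and Lemma~\ref{l:tinDL} applies: either there is $F\in\FIL(P)$ with $t$ outside the interior of $F$ --- but for such an $F$, $\LDA(F)$ is by definition the complement of the interior of $F$, so $t\in\LDA(F)$ --- or there is $F\in\LFIL(\vecs)$ with $t\in\LDA(F)$. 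Either way $t$ lies in the claimed union. For ``$\supseteq$'': take $F\in\FIL(P)\cup\LFIL(\vecs)$ and $t\in\LDA(F)$; I must show $t\in\reach(\vecs)$ and that it is reachable by a CSCS path. By Lemma~\ref{lem:reachfilling}, the first configuration $\vecr$ on $\FC(\vecs)$ tangent to $F$ (or, when $F\in\FIL(P)$, some counterclockwise configuration tangent to $F$) is reachable from $\vecs$ by a $CSC$ path (respectively a $CS$ path). Here one must be slightly careful: $\LFIL(P)$ in Lemma~\ref{lem:reachfilling} should be read as $\LFIL(\vecs)$ in the context of a fixed $\vecs$, and the definition of $\LDA(F)$ as $\LDA(\vecr)$ must be matched against exactly the configuration $\vecr$ produced by that lemma. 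Concatenating the $CSC$ (or $CS$) path to $\vecr$ with the $CS$ path from $\vecr$ that realizes a given point of $\LDA(F)=\LDA(\vecr)$ yields a path of type $CSCSS = CSCS$ (adjacent collinear segments merge), which stays inside $P$, so $t\in\reach(\vecs)$ and is reached by a CSCS path. One should also note that for $F\in\FIL(P)$ the prefix is only $CS$, giving $CSCS$ as well, so the bound is uniform.

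Two small gaps need closing to make the argument airtight. First, Lemma~\ref{l:tinDL} is stated under the hypotheses $\ldisk(\vecs)\notin\FIL(P)$ and $\FC(\vecs)$ not meeting the interior of $\ldisk(\vecs)$, which is exactly the setting of case~(ii) after the degenerate subcase has been removed, so its invocation is legitimate; I would state this alignment explicitly. Second, for the CSCS claim in case~(i) one should remark that a point of $\LDA(\vecs)$ reached by a zero-length arc and then a segment is literally an $S$ path, still a degenerate CSCS; there is no subtlety, but it is worth one sentence so the ``any point in $\reach(\vecs)$ can be reached by a CSCS path'' assertion is seen to hold in both branches of the dichotomy. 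I expect the main obstacle to be purely bookkeeping: matching the configuration $\vecr$ across the definition of $\LDA(F)$ and Lemma~\ref{lem:reachfilling}, and verifying that the concatenated path is genuinely $C^1$ (the arcs and segments meet tangentially by construction at $\vecr$ and at the junction of the $S$ and $C$ pieces inside $P$) and stays within $P$ throughout --- both of which follow from how $\LDA$ and $\LFIL$ were set up, but deserve an explicit remark rather than being left to the reader.
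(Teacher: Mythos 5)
Your assembly is correct and is essentially the paper's own proof, which simply states that the proposition ``follows directly'' from Lemmas~\ref{lem:pocketreach}, \ref{l:tinDL} and~\ref{lem:reachfilling}; you supply exactly that case analysis (pocket case, degenerate $\ldisk(\vecs)\in\FIL(P)$ case, and the two inclusions via Lemma~\ref{l:tinDL} and Lemma~\ref{lem:reachfilling}) together with the concatenation argument for the CSCS claim. The only harmless slip is in the path-type bookkeeping: the $CSC$ prefix from Lemma~\ref{lem:reachfilling}(ii) followed by the $CS$ path realizing a point of $\LDA(\vecr)$ gives $CSCCS$, which collapses to $CSCS$ because the two adjacent arcs lie on the same unit circle $F$ and meet tangentially at $\vecr$ (not because collinear segments merge).
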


In the characterization above, it seems that an infinite number of disks
could possibly contribute to the boundary of the reachable region. In the 
following, we show that the contribution of the $\LDA$s along any edge 
can be reduced to at most two $\LDA$s. We focus
on a particular edge $f$. Let $\dir$ denote the counterclockwise direction
along this edge. Then we order the counterclockwise configurations along
$f$ according to direction $\dir$, that is, for two such configurations
$\vecs_1=(s_1,\dir)$ and $\vecs_2=(s_2,\dir)$, we say that 
$\vecs_1 \before \vecs_2$ when
$\overrightarrow{s_1s_2} \cdot \dir \geq 0$.

\begin{lemma}
  \label{l:twoconf}
  Let $\vecs_1=(s_1,\dir)$ and $\vecs_2=(s_2,\dir)$ be two
  counterclockwise configurations on the same edge~$f$ of~$P$, such
  that $\vecs_1 \before \vecs_2$. Let $\vech_1=(h_1,\dir)$ denote the
  first counterclockwise configuration on $f$ such that $\vecs_1 \before
  \vech_1$ and $\ldisk(\vech_1)$ intersects $\FC(\vech_1)\setminus
  \{h_1\}$. Then $\LDA(\vecs_2) \subset \LDA(\vecs_1) \cup
  \LDA(\vech_1)$.
\end{lemma}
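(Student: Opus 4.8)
\emph{Setup and a basic identity.} The plan is to first normalise coordinates: put $f$ on the $x$-axis with $\dir=(1,0)$ and $P$ above it, so that for $\vecs=((x,0),\dir)$ the disk $\ldisk(\vecs)$ is the unit disk centred at $(x,1)$ and advancing $\vecs$ in the $\before$-order translates $\ldisk(\vecs)$ rightward. Writing $v$ for the right endpoint of $f$, the edge $f$ contributes no turning, so $\FC(\vecs)=\overline{sv}\cup C_0$ where $C_0=\FC(\vecs)\cap(\bd P\setminus f)$ is independent of $\vecs\in f$; and since $\ldisk(\vecs)$ touches $f$ only at $s$, it meets $\FC(\vecs)\setminus\{s\}$ iff it meets $C_0$. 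As $x\mapsto\operatorname{dist}((x,1),C_0)$ is convex, the configurations on $f$ whose left disk meets $C_0$ form a $\before$-interval ending at $v$, so $\vech_1$ is the first member of that interval which is not $\before$-smaller than $\vecs_1$, and we are in exactly one of: (a)~$\vecs_1\before\vecs_2\before\vech_1$ (so every $\vecs$ with $\vecs_1\before\vecs\before\vech_1$ is ``good'', its left disk disjoint from $C_0$), or (b)~$\vech_1\before\vecs_2$. I would also record the identity $\LDA(\vecs)=P\setminus\operatorname{int}\ldisk(\vecs)$ whenever $\ldisk(\vecs)\subseteq P$: a $CS$ path always avoids $\operatorname{int}\ldisk(\vecs)$, and conversely, when $\ldisk(\vecs)\subseteq P$ the boundary arc can be followed for a full turn, so any $t\in P\setminus\operatorname{int}\ldisk(\vecs)$ is reached by running the arc to the tangent point whose forward direction points toward $t$ and then taking the tangent segment, which lies in $P$ by convexity.

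\emph{Case (b): $\vech_1\before\vecs_2$.} Here I would show $\LDA(\vecs_2)\subseteq\LDA(\vech_1)$. The straight segment from $h_1$ to $s_2$ along $f$ is a legal path, so the configuration $\vecs_2$ is reachable from $\vech_1$, whence $\reach(\vecs_2)\subseteq\reach(\vech_1)$; and since $\ldisk(\vech_1)$ meets $\FC(\vech_1)\setminus\{h_1\}$, Lemma~\ref{lem:pocketreach} gives $\reach(\vech_1)=\LDA(\vech_1)$, hence $\LDA(\vecs_2)\subseteq\reach(\vecs_2)\subseteq\LDA(\vech_1)$. (If $\ldisk(\vech_1)$ only touches $C_0$ tangentially, Lemma~\ref{lem:pocketreach} does not apply directly, but then $\ldisk(\vech_1)\subseteq P$, and $\LDA(\vecs_2)$ stays in the pocket that the transversally crossing disk $\ldisk(\vecs_2)$ cuts off along $C_0$, which is disjoint from $\operatorname{int}\ldisk(\vech_1)$.)

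\emph{Case (a): $\vecs_1\before\vecs_2\before\vech_1$.} Take $t\in\LDA(\vecs_2)$, so $t\notin\operatorname{int}\ldisk(\vecs_2)$. If $t\notin\operatorname{int}\ldisk(\vecs_1)$, the identity gives $t\in P\setminus\operatorname{int}\ldisk(\vecs_1)=\LDA(\vecs_1)$. Otherwise $t$ lies in the lune $\operatorname{int}\ldisk(\vecs_1)\setminus\operatorname{int}\ldisk(\vecs_2)$; since these are unit disks with collinear centres and $x_{s_1}<x_{s_2}$, subtracting the two circle inequalities puts the lune in the halfplane $\{x<\tfrac12(x_{s_1}+x_{s_2})\}$, which is disjoint from $\operatorname{int}\ldisk(\vech_1)$ because $x_{h_1}\ge x_{s_2}$; so $t\in P\setminus\operatorname{int}\ldisk(\vech_1)=\LDA(\vech_1)$. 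Either way $t\in\LDA(\vecs_1)\cup\LDA(\vech_1)$.

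\emph{Main obstacle.} The arguments above are clean exactly when the left disks that occur ($\ldisk(\vecs_1)$, $\ldisk(\vech_1)$, and the pocket of $\ldisk(\vecs_2)$ in Case (b)) are contained in $P$. The real difficulty is a configuration lying very close to the left endpoint $u$ of $f$, where $\ldisk(\vecs)$ may protrude through the ``back'' part $\bd P\setminus(f\cup C_0)$; then $\LDA(\vecs)$ is only $P\setminus\operatorname{int}\ldisk(\vecs)$ minus a corner region at $u$, and the identity fails. I expect the bulk of the work to be showing that these corner regions shrink monotonically as $\vecs$ advances — so that a point of $\LDA(\vecs_2)$ never lands in a corner region that $\LDA(\vecs_1)$ misses, because once $\ldisk(\vecs_2)$ has cleared $u$ it is $\LDA(\vecs_2)$ itself that avoids the full translated family of disks — together with the tangential-versus-transversal bookkeeping for $\vech_1$; both are governed by the convexity of $P$ and the monotone translation of $\ldisk(\vecs)$, exactly as in the setup.
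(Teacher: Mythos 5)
Your argument hinges on the identity $\LDA(\vecs)=P\setminus\operatorname{int}\ldisk(\vecs)$, which is only valid when $\ldisk(\vecs)\subseteq P$, and nothing in the hypotheses of the lemma guarantees this for any of $\ldisk(\vecs_1)$, $\ldisk(\vecs_2)$, $\ldisk(\vech_1)$: the definition of $\vech_1$ only controls intersections with the forward chain, while all three disks may protrude through the back part of $\bd P$. (Take, e.g., a thin triangle whose top edge has direction just beyond $\pi$ and lies below height $2$ over most of $f$: every left disk on $f$ misses $\FC(\vecs)\setminus\{s\}$ yet sticks far out of $P$, and this can persist even at $\vech_1$, i.e.\ at the moment of first contact with the forward chain.) In that regime, $t\notin\operatorname{int}\ldisk(\vecs_1)$ does \emph{not} give $t\in\LDA(\vecs_1)$, because the arc of $\ldisk(\vecs_1)$ needed to aim at $t$ may leave $P$; the same objection hits your lune argument, which concludes $t\in\LDA(\vech_1)$ from $t\notin\operatorname{int}\ldisk(\vech_1)$. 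You acknowledge this under ``Main obstacle'' and defer it, but that deferred part \emph{is} the substance of the lemma; the proposed fix (monotone shrinking of ``corner regions'') is not carried out and is not obviously the right invariant. Two smaller points: $x\mapsto\operatorname{dist}((x,1),C_0)$ is in general \emph{not} convex ($C_0$ is not a convex set; the distance is a minimum of convex functions), so the interval structure needs a different justification; and in your case (b) the tangential patch asserts $\ldisk(\vech_1)\subseteq P$, which again does not follow, since tangency with the forward chain does not preclude protrusion through the back chain.

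For comparison, the paper's proof never assumes any disk is contained in $P$. The cases $\vech_1=\vecs_1$ and $\vech_1\before\vecs_2$ are handled with the Pocket Lemma, essentially as in your case (b) (that part of your proposal is sound, modulo the tangency remark above). In the remaining case $\vecs_1\before\vecs_2\before\vech_1$, instead of comparing disks pointwise, the paper splits according to the witnessing path: if $t\in\LDA(\vecs_2)$ is reached by an arc of $\ldisk(\vecs_2)$ of length less than $\pi$ plus a segment, then $t\in\LDA(\vecs_1)$; if the arc has length at least $\pi$, it exhibits the tangent point $r$ on $\ldisk(\vech_1)$ with $rt$ tangent from the left and shows the arc of $\ldisk(\vech_1)$ from $h_1$ to $r$ lies in $P$ by comparing it with the longer arc of $\ldisk(\vecs_2)$ that the given path already certifies to lie in $P$, using that $\ldisk(\vech_1)$ is the translate of $\ldisk(\vecs_2)$ along $f$ up to first contact with the forward chain. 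Some such explicit witness-path construction (rather than the identity $\LDA=P\setminus\operatorname{int}\ldisk$) is what your write-up is missing.
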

\begin{proof}
  We first assume that $\FC(\vecs_1)$ intersects the interior of
  $\ldisk(\vecs_1)$, so that $\vech_1=\vecs_1$. Then $\LDA(s_1)$ can
  be enlarged into a pocket, so by Lemma~\ref{l:pocket}, we have
  $\LDA(\vecs_2)\subset \LDA(\vecs_1)$.
  
  Otherwise, $\FC(\vecs_1)$ does not intersect the interior
  of $\ldisk(\vecs_1)$. Thus, the disk $\ldisk(\vech_1)$ touches
  $\FC(\vecs_1)$. (See Figure~\ref{f:com2}.) 
  %% %%%%%%%%%%%%%%%% FIG bcr-com2.ipe %%%%%%%%%%%%%%%%%%%%%%%
  \epsfigure{com2}{Proof of Lemma~\ref{l:twoconf}.}
  %% %%%%%%%%%%%%%%%% FIG bcr-com2.ipe %%%%%%%%%%%%%%%%%%%%%%%
  If $\vech_1 \before \vecs_2$,
  then $\LDA(\vech_1)$ is a pocket, so by Lemma~\ref{l:pocket}, we have
  $\LDA(\vecs_2) \subset \LDA(\vech_1)$. 

  Finally, we assume that $\vecs_1 \before \vecs_2 \before \vech_1$.
  Let $t$ denote a point in $\LDA(\vecs_2)$.  If~$t$ is reached after
  an arc of $\ldisk(\vecs_2)$ with length less than~$\pi$ followed by
  a line segment~$r_2t$, then it is clearly in $\LDA(\vecs_1)$. (See
  Figure~\ref{f:com2}, left).  On the other hand, if~$t$ is reached by
  an arc of $\ldisk(\vecs_2)$ with length at least~$\pi$, followed by
  a segment~$r_2t$, we claim that $t \in \LDA(\vech_1)$. Let $r$ be
  the point of $\ldisk(\vech_1)$ such that the line $rt$ is tangent to
  $\ldisk(\vech_1)$ from the left.  (See Figure~\ref{f:com2}, right.)
  We have to argue that the arc of~$\ldisk(\vech_1)$ from~$h_1$ to~$r$
  lies in~$P$. This follows from the fact that $\ldisk(\vech_1)$ is
  obtained by translating $\ldisk(\vecs_2)$ along~$f$ until it
  touches~$\FC(\vecs_1)$, that the arc of $\ldisk(\vecs_2)$ from $s_2$
  to~$r$ is in~$P$, and that the arc from $h_{1}$ to~$r_{2}$ is
  shorter than the arc from~$s_2$ to~$r$.
\end{proof}

Now we can show how to construct the reachable region from a
configuration on the boundary.
\begin{prop}
  \label{p:computeside}
  Let $\vecs$ be a counterclockwise configuration on the boundary of
  an $n$-sided convex polygon $P$.  Then we can compute in $O(n)$ time
  a set $\BFIL(\vecs)$ of $O(n)$ disks, such that
  $\reach(\vecs)=\bigcup_{F \in \BFIL(\vecs)} \LDA(F)$.
\end{prop}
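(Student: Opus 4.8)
The plan is to start from the characterization in Proposition~\ref{p:reachside} and reduce the potentially infinite union $\bigcup_{F\in\FIL(P)\cup\LFIL(\vecs)}\LDA(F)$ to a union over $O(n)$ disks. Case~(i) of Proposition~\ref{p:reachside} is immediate: if $\FC(\vecs)$ meets the interior of $\ldisk(\vecs)$ then $\reach(\vecs)=\LDA(\vecs)$, so we may take $\BFIL(\vecs)=\{\ldisk(\vecs)\}$, computable in constant time. So assume case~(ii). I would split the union into the part coming from $\FIL(P)$ and the part coming from $\LFIL(\vecs)$, and handle each separately.

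For the $\FIL(P)$ part, recall that for $F\in\FIL(P)$ the region $\LDA(F)$ is simply the complement of the interior of $F$, so $\bigcup_{F\in\FIL(P)}\LDA(F) = \RR^2\setminus\bigcap\FIL(P)$ intersected with $P$; by Observation~\ref{obs:convd} this equals the complement of $\bigcap\FIL(P)$, a single convex region bounded by $O(n)$ arcs, which is well known to be computable in $O(n)$ time from the medial-axis / largest-inscribed-disk structure of $P$. Concretely one wants a constant number of disks of $\FIL(P)$ whose $\LDA$s already cover $P\setminus\bigcap\FIL(P)$; since $\bigcap\FIL(P)$ is itself an intersection of unit disks centered in the inner convex region, a bounded number of extreme disks of $\FIL(P)$ suffices, and in any case we are allowed $O(n)$ disks, so even listing the disks supporting the boundary arcs of $\fil(P)$ is fine. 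The main work is the $\LFIL(\vecs)$ part.

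For $\LFIL(\vecs)$, the key tool is Lemma~\ref{l:twoconf}: along any fixed edge $f$ of $P$, among all counterclockwise configurations $\vecs_2$ on $f$ that arise as first-touching configurations of disks in $\LFIL(\vecs)$, the region $\LDA(\vecs_2)$ is contained in $\LDA(\vecs_1)\cup\LDA(\vech_1)$, where $\vecs_1$ is the $\before$-first such configuration on $f$ and $\vech_1$ is the first configuration on $f$ after $\vecs_1$ whose left disk pokes into its own forward chain. Thus each edge contributes at most two relevant disks to the union, namely $\ldisk(\vecs_1)$ (the disk of $\LFIL(\vecs)$ that first touches $f$, obtained by translating along $f$) and $\ldisk(\vech_1)$. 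I would therefore iterate over the $n$ edges; for each edge $f$ I locate the first disk of $\LFIL(\vecs)$ touching it — this is the disk of radius one, resting on $f$, pushed in the counterclockwise direction along $f$ as far as $P\cup\ldisk(\vecs)$ allows while still touching $\FC(\vecs)$ — and I locate $\vech_1$, the first configuration on $f$ whose left disk meets its forward chain; both are determined by a constant number of polygon edges / the structure of $\FC$, so each can be found in amortized $O(1)$ time by a linear scan that advances monotonically around $\bd P$ together with $\vecs$. Collecting these $\le 2n$ disks, together with the $O(n)$ disks handling $\FIL(P)$ and the disk $\ldisk(\vecs)$ itself, gives the set $\BFIL(\vecs)$ of size $O(n)$, and by Lemma~\ref{l:twoconf} and the $\FIL$ discussion above $\bigcup_{F\in\BFIL(\vecs)}\LDA(F)$ equals $\reach(\vecs)$ exactly (one inclusion is Proposition~\ref{p:reachside}(ii), the other because every disk we added is in $\FIL(P)\cup\LFIL(\vecs)$).

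The step I expect to be the main obstacle is making the $O(n)$ running time rigorous: one must show that as the edge $f$ ranges counterclockwise around $P$, the ``witness'' disks $\ldisk(\vecs_1(f))$ and $\ldisk(\vech_1(f))$ — and the configuration $\vech_1$ itself — can be maintained with only $O(1)$ amortized work per edge, i.e. that the relevant contact points on $\bd P$ and the forward-chain boundary move monotonically and never backtrack. This requires a careful but routine amortization argument (a rotating-calipers / two-pointer sweep), plus an argument that $\LFIL(\vecs)$ restricted to each edge really is captured by its first-touching disk in the $\before$-order, so that Lemma~\ref{l:twoconf} applies with $\vecs_1$ chosen as that first disk. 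Once monotonicity of all the sweeping pointers is established, the bound on $|\BFIL(\vecs)|$ and the correctness of the union both follow directly from the lemmas already proved.
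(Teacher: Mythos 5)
Your overall structure matches the paper's: dispose of case~(i) with $\BFIL(\vecs)=\{\ldisk(\vecs)\}$, handle the $\FIL(P)$ contribution via Observation~\ref{obs:convd} by keeping only the disks whose centers are hull vertices (found from the medial axis of $P$), and use Lemma~\ref{l:twoconf} to reduce the $\LFIL(\vecs)$ contribution to two configurations $\vecs_1,\vech_1$ per edge. (Your aside that ``a bounded number of extreme disks of $\FIL(P)$ suffices'' is false --- $\Theta(n)$ hull-vertex disks may be needed --- but since you fall back on listing $O(n)$ of them this does not hurt.)

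The genuine gap is exactly the step you flag yourself: how to compute the $2n$ configurations $\vecs_1(f)$ and $\vech_1(f)$ in $O(n)$ total time. You propose a two-pointer/rotating-calipers sweep resting on an unproved monotonicity claim (that the contact points defining $\vecs_1$ and $\vech_1$ advance monotonically around $\bd P$ as $f$ does), and you concede this is ``the main obstacle.'' Since the whole content of the proposition beyond Proposition~\ref{p:reachside} and Lemma~\ref{l:twoconf} is precisely this $O(n)$-time computation, leaving it at the level of an expected amortization argument leaves the proposition unproved; note also that the constraint is containment in $P\cup\ldisk(\vecs)$, not in $P$, so the sweep would have to handle the bulge of $\ldisk(\vecs)$ as well. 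The paper avoids any monotonicity argument: it defines the point $d$ where $\bd P$ (counterclockwise from $s$) first re-enters $\ldisk(\vecs)$, and calls $\vecs'$ a \emph{candidate configuration} if $\ldisk(\vecs')\subset P\cup\ldisk(\vecs)$ and $\ldisk(\vecs')$ is tangent to two edges of $\FC(\vecs)$, or tangent to one edge and passes through $d$; it then observes that $\vecs_1(f)$ and $\vech_1(f)$ are respectively the first and last candidate configuration on $f$. All candidate disks touch, in at least two points, the boundary of an auxiliary simple polygon $P'$ obtained by replacing the boundary chain from $d$ to $s$ by two or three edges so that $\ldisk(\vecs)\subset P'$; hence they all lie on the medial axis of $P'$, which is computed once in $O(n)$ time by the algorithm of Chin, Snoeyink and Wang, and a linear scan of its edges yields every $\vecs_1$ and $\vech_1$. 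If you want to salvage your route, you must either prove the monotonicity of your sweep pointers or replace that step by such a global medial-axis computation.
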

\begin{proof}
  We use the characterization of $\reach(\vecs)$ from
  Proposition~\ref{p:reachside}.  If $\FC(\vecs)$ intersects the
  interior of $\ldisk(\vecs)$, then we just set
  $\BFIL(\vecs)=\{\ldisk(\vecs)\}$.  So in the remainder of this
  proof, we assume that $\FC(\vecs)$ does not intersect the interior
  of $\ldisk(\vecs)$.

  If $\FIL(P) \neq \emptyset$, then we first construct the
  contribution of the disks in $\FIL(P)$. By
  Observation~\ref{obs:convd}, we only need to find the unit disks
  whose centers are the vertices of the convex hull of the centers of
  the disks in $\FIL(P)$. These disks are tangent to at least two
  edges of $P$, so their centers lie on the medial
  axis~\cite{agss-ltacv-89,csw-fmasp-99}.  of $P$. We compute this
  medial axis in $O(n)$~time using an algorithm by Aggarwal et
  al.~\cite{agss-ltacv-89}, and then check each edge on the medial
  axis to obtain these disks in $O(n)$~time.

  We now observe that if $\ldisk(\vecs) \in \FIL(P)$, then we can set
  $\BFIL(\vecs) = \FIL(\vecs)$ and are done.
  Thus, in the remainder of this proof, we assume that
  $\ldisk(\vecs) \notin \FIL(P)$, and explain how to find the
  contribution of~$\LFIL(P)$.  

  Let $d$ denote the first point on~$\bd P$, starting from $s$ in
  counterclockwise direction, such that $d \in \ldisk(\vecs)$.  (See
  Figure~\ref{f:medial1}(left).)   Let us call a \emph{candidate
    configuration} a configuration~$\vecs'$ on the counterclockwise
  boundary of~$P$ with the property that $\ldisk(\vecs') \subset P
  \cup \ldisk(\vecs)$ and such that $\ldisk(\vecs')$ is either tangent
  to two edges of $\FC(\vecs)$, or is tangent to one edge
  of~$\FC(\vecs)$ and contains the point~$d$.

  Consider an arbitrary edge~$f$ of~$P$.  Let $\vecs_1$ denote the
  first counterclockwise configuration on $f$ 
  such that $\ldisk(\vecs_1) \in \LFIL(P)$. When $\vech_1$ is as in
  Lemma~\ref{l:twoconf}, the $\LDA$s of all the disks in $\LFIL(P)$
  that are tangent to $f$ are contained in $\LDA(\vecs_1) \cup
  \LDA(\vech_1)$. So we only need to find $\vecs_1$ and $\vech_1$ to
  construct the contribution of $f$ to $\reach(\vecs)$.

  We observe that $\vecs_1$ and~$\vech_{1}$ are candidate
  configurations: in fact, $\vecs_1$ is the \emph{first} candidate
  configuration on~$f$, while $\vech_{1}$ is the \emph{last}.

  %% %%%%%%%%%%%%%%%%%% FIG. bcr-medial1.ipe %%%%%%%%%%%%%%%%%%%%
  \epsfigure{medial1}{On the left, $P$ is shaded. On the right, $P'$
  is shaded, and its medial axis is dashed.}
  %% %%%%%%%%%%%%%%%%%% FIG. bcr-linear.ipe %%%%%%%%%%%%%%%%%%%% 

  It remains to explain how to compute the candidate configurations
  efficiently.  Denote by $P'$ a simple polygon obtained by replacing
  the subchain of $\bd P$ that goes counterclockwise from~$d$ to~$s$
  with two or three edges, such that $\ldisk(\vecs) \subset P'$. (See
  Figure~\ref{f:medial1}(right).)  The disk $\ldisk(\vecs')$, for a
  candidate configuration~$\vecs'$, must lie in~$P'$ and must
  touch~$\bd P'$ in more than one point.  It follows that we can find
  the candidate configurations by first computing the medial axis
  of~$P'$ in $O(n)$ time using the algorithm by Chin et
  al.~\cite{csw-fmasp-99}, and then checking all edges of the medial
  axis.
\end{proof}

Proposition~\ref{p:computeside} shows that the reachable region for a
configuration on the boundary of~$P$ is delimited by~$O(n)$
disks. This bound is tight, as shown by the example in
Figure~\ref{f:linear}, where $\Omega(n)$ disks of~$\LFIL(\vecs)$
contribute to the boundary of the reachable region.
%% %%%%%%%%%%%%%%%%%% FIG. bcr-linear.ipe %%%%%%%%%%%%%%%%%%%%
\epsfigure{linear}{Example where $\Omega(n)$ disks of  
  $\LFIL(\vecs)$ contribute to the boundary of the reachable
  region. Here the disks of  $\LFIL(\vecs)$ are centered along
  the dashed circle. The reachable region is shaded.}
%% %%%%%%%%%%%%%%%%%% FIG. bcr-linear.ipe %%%%%%%%%%%%%%%%%%%%

\section{Special left-right and right-left paths suffice}
\label{s:righleft}

Let $\vecs$ be a starting configuration. A \emph{canonical RL-start}
from~$\vecs$ is a path from~$\vecs$ to a configuration~$\vecr$ on the
boundary of~$P$ that begins with a right-turning arc of unit radius
and continues with a left-turning arc of unit radius ending at~$\vecr$
(and tangent to the boundary of~$P$ there) (Figure~\ref{f:can}).
%% %%%%%%%%%%%%%%%%%% FIG. bcr-can.ipe %%%%%%%%%%%%%%%%%%%%
\epsfigure{can}{A canonical RL-start.}
%% %%%%%%%%%%%%%%%%%% FIG. bcr-can.ipe %%%%%%%%%%%%%%%%%%%%

Note that for each edge~$f$ of~$P$ and for a given~$\vecs$, there are
at most two canonical RL-starts from~$\vecs$ ending on~$f$. A
canonical LR-start is defined analogously: it begins with a
left-turning arc and continues with a right-turning arc.

In this section, we show that for determining the reachability by
paths in a convex polygon, it suffices to consider paths of a fairly
special form. Namely, we show that a point is reachable if and only if
it is directly accessible, or it can be reached by a path that begins
with a canonical start. 

Dubins~\cite{d-cmlca-57} showed that the shortest path of
bounded curvature between two configurations in the plane
is of type CSC or CCC. In the latter case, the middle arc
has length more than $\pi$. (See Figure~\ref{f:dubins}.)
We call these paths \emph{Dubins paths}. 

%% %%%%%%%%%%%%%%%%%% FIG. bcr-dubins.ipe %%%%%%%%%%%%%%%%%%%%
\epsfigure{dubins}{Three types of Dubins paths.}
%% %%%%%%%%%%%%%%%%%% FIG. bcr-dubins.ipe %%%%%%%%%%%%%%%%%%%%

\begin{prop}\label{p:rl-enough}
  Let $P$ be a convex polygon, let~$\vecs$ be a starting configuration
  in~$P$, and let $t \in P$ be reachable from~$\vecs$ by a
  bounded-curvature path.  Then $t$ lies in the directly accessible
  region $\DA(\vecs)$, or it can be reached by a path of one of the
  following forms: a canonical RL-start followed by a left-turning
  path (starting on a side of $P$), or a canonical LR-start followed
  by a right-turning path (starting on a side of $P$).
\end{prop}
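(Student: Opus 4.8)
The plan is to take a bounded-curvature path $\gamma$ from $\vecs$ to $t$ and, if $t \notin \DA(\vecs)$, massage $\gamma$ into one of the two special forms without losing reachability. First I would observe that if $\gamma$ never touches $\partial P$, then it lies in the interior of $P$, so it lies in $\fil(P)$ — more precisely every point of $\gamma$ has a neighborhood of radius $>0$ clear of $\partial P$, and by a compactness/shrinking argument one can show $t$ is then directly accessible (a short arc plus a segment suffices when there is maneuvering room, essentially the easy half of the characterization discussed after Lemma~\ref{l:nonreach}). So we may assume $\gamma$ reaches $\partial P$; let $\vecr = (r,\dir)$ be the \emph{first} configuration on $\gamma$ lying on $\partial P$, and let $\gamma_1$ be the portion of $\gamma$ from $\vecs$ to $\vecr$, $\gamma_2$ the portion from $\vecr$ to $t$. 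The portion $\gamma_1$ lies in the interior of $P$ except at its endpoint $\vecr$, so curvature is the only constraint on it: the reachability of the configuration $\vecr$ from $\vecs$ is governed purely by Dubins' theorem in the free plane, hence $\vecr$ is reachable from $\vecs$ by a Dubins path (CSC or CCC with long middle arc).

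The key step is then to replace the Dubins path from $\vecs$ to $\vecr$ by a \emph{canonical start} of the appropriate handedness. Since $\vecr$ is on $\partial P$ and $\dir$ is tangent there (pointing counterclockwise, WLOG, by the convention of Section~\ref{s:side}), the last arc of the Dubins path arriving at $\vecr$ must be \emph{left-turning} — a right-turning arc tangent to $\partial P$ from inside at $r$ would immediately exit $P$, or at least its final circle would be the right disk $\rdisk(\vecr)$ which bulges out of $P$ near $r$; more carefully, among the Dubins paths from $\vecs$ to $\vecr$ that stay in $P$ near $\vecr$, the incoming direction forces the terminal arc to curve $P$'s way. So the Dubins path ends with an arc on $\ldisk(\vecr)$. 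Now I distinguish: if this arc is preceded by a straight segment (type $\ast$SC), I want to instead route from $\vecs$ by a canonical RL-start — a short right arc followed by a left arc landing on $\partial P$; if it is preceded by another arc (type CCC), similar. The geometric claim to nail down is that from any $\vecs$ in $P$, and for the edge $f$ containing $r$ (or for the relevant edge), there is a canonical RL-start from $\vecs$ hitting $f$ at a configuration $\vecr'$ with $\vecr' \before \vecr$ (i.e. $\vecr'$ no later than $\vecr$ along $f$), and that $\ldisk(\vecr) = \ldisk(\vecr')$ so that $\LDA(\vecr) \subseteq \LDA(\vecr')$ by the pocket/monotonicity reasoning of Lemma~\ref{l:twoconf}; then concatenating the canonical RL-start with $\gamma_2$ gives a valid path to $t$. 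The handedness swaps to LR when points near $t$ are reached by a right turn from $\vecr$, giving the second form.

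The main obstacle I anticipate is verifying that the canonical start actually \emph{stays inside $P$}, and that one can always reach the needed left disk $\ldisk(\vecr)$: the right-turning first arc of an RL-start bulges toward $P$'s interior, which is fine, but one must check the subsequent left arc does not poke through a far edge before landing on $f$, and handle the degenerate/boundary cases where $\vecs$ is close to $\partial P$ or where the only Dubins path from $\vecs$ to $\vecr$ in $P$ already begins with the wrong handedness (one then argues $t$ was directly accessible, or picks a different first-hit point $\vecr$). A clean way to organize this is: trace $\gamma$ forward and let $\vecr$ be the first boundary contact; show $\vecr$'s Dubins connection from $\vecs$ has a left-turning terminal arc; then show that \emph{extending} that terminal arc backwards until it becomes a canonical RL- (or LR-) start is always possible within $P$, using convexity of $P$ to bound where the backward-extended left arc can go and using Lemma~\ref{l:ggPI} / the Pocket Lemma to rule out obstructions, exactly as in the proofs of Lemmas~\ref{l:tinDL} and~\ref{l:twoconf}. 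The remaining tail $\gamma_2$ is already a path starting on a side of $P$, so Proposition~\ref{p:reachside} (and the CSCS bound there) applies to it, yielding the claimed CCSCS-type overall shape.
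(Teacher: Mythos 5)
Your proposal has genuine gaps, both in the reduction to a boundary contact and in the key replacement step. First, the opening reduction is false: if $\gamma$ never touches $\partial P$ you conclude $t\in\DA(\vecs)$, but a path lying in the interior of $P$ need not lie in $\fil(P)$, and reachable points outside $\DA(\vecs)$ (e.g.\ the situation of Figure~\ref{f:man}, where maneuvering in a pocket reaches points of $\bigcap\FIL(P)$ not in $\DA(\vecs)$) can be reached while keeping positive clearance from $\partial P$; so restricting attention to boundary-touching paths loses exactly the interesting cases, and the ``compactness/shrinking'' argument cannot rescue it. Relatedly, even when $\gamma$ does meet $\partial P$, your step ``$\gamma_1$ lies in the interior, hence $\vecr$ is reachable from $\vecs$ by a Dubins path'' only yields a Dubins path in the free plane: nothing guarantees that this CSC/CCC path, and in particular its terminal arc which you want to reuse, stays inside $P$. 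The paper sidesteps both issues by invoking Jacobs--Canny: the shortest path inside $P$ from $\vecs$ to $t$ is a concatenation $\gamma_1\gamma_2\cdots\gamma_m$ of Dubins pieces whose intermediate endpoints already lie on $\partial P$, so one can work with a genuine Dubins piece $\gamma_1$ inside $P$ ending at a boundary configuration $\vecq$ (or with $m=1$ directly).

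The central claim of your plan is also unsubstantiated and, as formulated, does not work. A canonical RL-start is forced to begin with an arc of $\rdisk(\vecs)$ (and an LR-start with an arc of $\ldisk(\vecs)$), and there are at most two per edge; you cannot ``extend the terminal arc of the Dubins path backwards'' into one, since the left arc of such a start is generally not an arc of $\ldisk(\vecr)$, and there is no reason any canonical start should land on the same edge $f$ as $\vecr$ at all, let alone at a configuration $\vecr'\before\vecr$ (also $\ldisk(\vecr')\neq\ldisk(\vecr)$ whenever $r'\neq r$, so that identity cannot be used). The paper supplies precisely the missing mechanism: in each of the three cases for the type of $\gamma_1$, it rotates a unit disk $D_\theta$ about the center of $\ldisk(\vecs)$, starting from the second disk of $\gamma_1$, and takes the smallest $\theta^\ast$ at which the relevant arc of $D_{\theta^\ast}$ touches $\partial P$ --- such a $\theta^\ast$ exists because otherwise $\ldisk(\vecs)\cup\rdisk(\vecs)\subset P$ and then $\DA(\vecs)=P$. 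The touching point gives the canonical start \emph{wherever it happens to land}, and a tangent segment reconnects to the original piece so that the continuation to $\vecq$ (and hence to $t$) consists only of right-turning arcs and segments, which is what the statement requires; the appeal to Proposition~\ref{p:reachside} for the tail is then legitimate, as in the paper's Section~\ref{s:together}. Without an argument of this sweep-and-reconnect kind, your replacement step is a gap rather than a proof.
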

\begin{proof}
  Jacobs and Canny~\cite{jc-pspmr-92} showed that, in a polygonal
  environment, the shortest path of bounded curvature between two
  configurations is a sequence of Dubins paths.  The final
  configurations of these Dubins paths (except for the last one) all
  lie on the boundary of the polygonal environment. So, if we denote
  by $\gamma$ a shortest path from $\vecs$ to~$t$, then $\gamma$ can
  be written as a sequence $\gamma=\gamma_1 \gamma_2 \dots \gamma_m$
  of $m$~Dubins paths. When $m \geq 2$, we know that the final
  configuration $\vecq$ of $\gamma_1$ lies on $\bd P$.

  We handle three cases separately, according to the type of
  $\gamma_1$ (see Figure~\ref{f:dubins}). If $\ldisk(\vecs)$ and
  $\rdisk(\vecs)$ are contained in $P$, then $\DA(\vecs)=P$, so from
  now on, we assume that $\ldisk(\vecs)$ or $\rdisk(\vecs)$ crosses
  $\bd P$.

  %% %%%%%%%%%%%%%%%%%% FIG. bcr-casei.ipe %%%%%%%%%%%%%%%%%%%%
  \epsfigure{casei}{Proof of Proposition~\ref{p:rl-enough},
  case~(i).}
  %% %%%%%%%%%%%%%%%%%% FIG. bcr-casei.ipe %%%%%%%%%%%%%%%%%%%%
  \paragraph{Case~(i).} We assume that $\gamma_1$ is of type $CCC$.
  We denote by $C_1$, $C_2$, $C_3$ the three circle arcs such that
  $\gamma_1=C_1C_2C_3$, and recall that $C_2$ has length larger
  than~$\pi$.  If $C_2$ touches $\bd{P}$ we are done, so from now on
  we assume that $C_2$ does not touch $\bd{P}$.  Let $D$ be the disk
  supporting~$C_2$.  Without loss of generality, we assume that $C_1$
  and $C_3$ turn counterclockwise and $C_2$ turns clockwise.  (See
  Figure~\ref{f:casei}(a).)  Let $c_L$ denote the center
  of~$\ldisk(\vecs)$.  We denote by $D_\theta$ the disk obtained by
  rotating~$D$ by an angle~$\theta$ around~$c_L$. We define
  $\thetastar$ as the smallest~$\theta \in (0,2\pi]$ such that $\theta
    \leq \pi/3$ and $D_\theta \setminus D$ touches $\bd{P}$, or
    $\theta > \pi/3$ and $D_\theta$ touches $\bd{P}$.  Since
    $\ldisk(\vecs) \cup \rdisk(\vecs)\not\subset P$, such a
    $\thetastar$ exists.  There is a path from $\vecs$ to $\vecq$
    consisting of an arc of~$\ldisk(\vecs)$, an arc of $D_{\theta^*}$,
    a line segment, an arc of~$D$ and $C_3$. (See
    Figure~\ref{f:casei}(b).)  This means that $t$ can be reached by a
    canonical LR-start and a right-turning path.

  %% %%%%%%%%%%%%%%%%%% FIG. bcr-caseii.ipe %%%%%%%%%%%%%%%%%%%%
  \epsfigure{caseii}{Proof of Proposition~\ref{p:rl-enough},
  case~(ii).}
  %% %%%%%%%%%%%%%%%%%% FIG. bcr-caseii.ipe %%%%%%%%%%%%%%%%%%%%
  \paragraph{Case~(ii).} We assume that $\gamma_1=C_1SC_2$,
  where $C_1$ is left-turning, $S$ is a segment and 
  $C_2$ is right-turning. (See Figure~\ref{f:caseii}.)

  Let us first assume that $m = 1$, that is that $\gamma = \gamma_1 =
  C_1SC_2$ and that $C_2$ has length less than~$\pi$.  In this case,
  $t$ lies in~$\DA(\vecs)$.  Indeed, if $C_1$ has length larger
  than~$\pi$ or if $t$ lies to the left of the directed line~$\ell'$
  defined by~$\vecs$, then $t \in \LDA(\vecs)$. (See
  Figure~\ref{f:caseii}(c)). If $C_1$ has length less than~$\pi$ and
  $t$ lies to the right of~$\ell'$, then $t \in \RDA(\vecs)$, since
  $\gamma$ cannot enter $\rdisk(\vecs)$ because $C_2$ has length less
  than~$\pi$.

  It remains to consider the case where $m \geq 2$ or the length of
  $C_2$ is at least~$\pi$. Let $D$ be the unit disk tangent to
  $\ldisk(\vecs)$ and lying to the right of the segment~$S$, and
  denote by $D_\theta$ the disk obtained by rotating~$D$ by an
  angle~$\theta \in (0,2\pi]$ counterclockwise around~$c_L$.  (See
    Figure~\ref{f:caseii}(a).) Let $C_\theta$ be the clockwise arc
    of~$D_\theta$ starting at $D_L(\vecs) \cap D_\theta$ and going clockwise
    until the first intersection point with the path~$SC_2$, or
    returning to its starting point if there is no such intersection.
    Let $\thetastar$ be the smallest value of $\theta \in (0,2\pi]$
      such that $C_\theta$ touches~$\bd P$. Since $\ldisk(\vecs) \cup
      \rdisk(\vecs)\not\subset P$, such a $\thetastar$ exists.  Now
      $\vecq$ can be reached by a path consisting of an arc of
      $\ldisk(\vecs)$, an arc of $D_{\thetastar}$, a line segment, and
      an portion of~$C_2$. (See Figure~\ref{f:caseii}(b).)  It follows
      that $t$~can be reached by a canonical LR-start and a
      right-turning path.

  %% %%%%%%%%%%%%%%%%%% FIG. bcr-caseiii.ipe %%%%%%%%%%%%%%%%%%%%
  \epsfigure{caseiii}{Proof of Proposition~\ref{p:rl-enough},
  case~(iii).}
  %% %%%%%%%%%%%%%%%%%% FIG. bcr-caseiii.ipe %%%%%%%%%%%%%%%%%%%%
  \paragraph{Case~(iii).} We assume that $\gamma_1=C_1SC_2$,
  where $C_1$ and $C_2$ are right-turning and $S$ is a segment.  If
  the length of $C_2$ is less than~$\pi$ and $m=1$ (see
  Figure~\ref{f:caseiii}(a)), then $t \not \in \rdisk(\vecs)$, and
  therefore $t \in \RDA(\vecs)$.  We therefore assume that $m \geq 2$
  or the length of $C_2$ is at least~$\pi$ .

  We denote by $D_\theta$ the disk obtained by rotating
  $\rdisk(\vecs)$ around the center~$c_L$ of~$\ldisk(\vecs)$ by an
  angle $\theta \in (0,\pi/2]$. (See Figure~\ref{f:caseiii}(b).) When
    $D_\theta$ intersects~$\gamma_1$, we denote by $C_\theta$ the arc
    of $\bd D_\theta$ that starts at $\ldisk(\vecs) \cap D_\theta$ and
    goes clockwise until it meets~$\gamma_1$. Otherwise, we denote
    $C_\theta=\bd D_\theta$.  As before, let $\thetastar$ be the
    smallest value of $\theta \in (0,2\pi]$ such that $C_\theta$
      touches $\bd P$. Again, $\thetastar$ exists since $\ldisk(\vecs)
      \cup \rdisk(\vecs)\not\subset P$.  Now $\vecq$ can be reached by
      a path that consists of an arc of~$\ldisk(\vecs)$, an arc of
      $D_{\thetastar}$, a segment tangent to $D_{\thetastar}$
      and~$\gamma_1$, and a subpath of~$\gamma_1$. (See
      Figure~\ref{f:caseiii}(c).)  This implies again that~$t$ can be
      reached by a canonical LR-start followed by a right-turning
      path. 
\end{proof}

We can give a somewhat different characterization:
\begin{prop} 
  Let $P$ be a convex polygon, let $\vecs$ be a starting configuration
  in~$P$, and let $t \in P$ be reachable from $\vecs$ by a
  bounded-curvature path. Then $t$ is reachable by a path of the
  form~CCSCS. More precisely, $t$ is reachable by a path of the form
  CS, or it is reachable by a path of the form CCSCS, where the two
  final disks touch the boundary of~$P$, and the path goes through
  these touching points.
\end{prop}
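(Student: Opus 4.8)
The plan is to stitch together the two structural results already in hand. Proposition~\ref{p:rl-enough} reduces the reachability of $t$ to two cases: either $t\in\DA(\vecs)$, or $t$ is reached by a canonical start followed by a monotone-turning path beginning on a side of~$P$; Proposition~\ref{p:reachside} then says that this monotone part --- a path from a boundary configuration --- can be taken to be a CSCS path. The observation that makes everything fit is that the last arc of the canonical start and the first arc of the CSCS path both lie on the left disk of the intermediate boundary configuration, so they join $C^1$ into one arc, and the two pieces splice into a single CCSCS path.

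Concretely, if $t\in\DA(\vecs)=\LDA(\vecs)\cup\RDA(\vecs)$, then by the definitions of $\LDA$ and $\RDA$ the point $t$ is reached by one circular arc (on $\bd\ldisk(\vecs)$ or $\bd\rdisk(\vecs)$) followed by one segment, i.e.\ by a CS path; this is the first alternative. Otherwise, up to a mirror reflection (which interchanges the roles of L and R), Proposition~\ref{p:rl-enough} gives a canonical RL-start $C_a C_b$ followed by a left-turning path starting at a configuration $\vecr=(r,\dir)$ on $\bd P$; since a left-turning path that stays in $P$ must curve inward, $\vecr$ is oriented counterclockwise. The arc $C_b$ is a left-turning unit arc ending at $\vecr$, so it lies on $\bd\ldisk(\vecr)$, and $\ldisk(\vecr)$ is tangent to $\bd P$ at~$r$.

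The left-turning tail from $\vecr$ to $t$ shows $t\in\reach(\vecr)$, so I would invoke Proposition~\ref{p:reachside}: $t$ is reached from $\vecr$ by a path $C_c S_d C_e S_f$ in which, by tracing through Lemmas~\ref{lem:pocketreach}, \ref{l:tinDL} and~\ref{lem:reachfilling}, the arc $C_c$ lies on $\bd\ldisk(\vecr)$ and $C_e$ lies on $\bd F$ for some disk $F\in\FIL(P)\cup\LFIL(\vecr)$ with $t\in\LDA(F)$ (when $\FC(\vecr)$ meets the interior of $\ldisk(\vecr)$ one even has $\reach(\vecr)=\LDA(\vecr)$, so $C_e$ and $S_f$ collapse). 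Now $C_b$ and $C_c$ both run counterclockwise along $\bd\ldisk(\vecr)$ and meet $C^1$ at $r$, so I merge them into one arc $C_2$ on $\bd\ldisk(\vecr)$ which the path traverses through~$r$. The resulting path $C_a\cdot C_2\cdot S_d\cdot C_e\cdot S_f$ has the form CCSCS (degenerating to CCS in the special case), and one of its two final supporting disks, $\ldisk(\vecr)$, is tangent to $\bd P$ at the traversed point~$r$.

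The remaining issue, which I expect to be the main obstacle, is to guarantee that the last supporting disk $F$ also touches $\bd P$ at a point the path goes through. If $F\in\LFIL(\vecr)$ this is immediate: $F=\ldisk(\vecr_F)$ for the first configuration $\vecr_F=(r_F,\cdot)$ on $\FC(\vecr)$ tangent to $F$, the construction behind Lemma~\ref{lem:reachfilling}(ii) makes the arc $C_e$ run along $\bd F$ through $r_F$, and $F$ is tangent to $\bd P$ there. If instead $F\in\FIL(P)$, then $t$ lies outside the interior of some unit disk contained in $P$, and I would replace $F$ by a disk $F'\in\FIL(P)$ that is tangent to two edges of $P$ and still has $t$ outside its interior: the centers of the disks in $\FIL(P)$ form a compact convex polygon (cf.\ Observation~\ref{obs:convd}), the distance to $t$ is convex on it and hence attains its maximum --- which is at least~$1$ --- at a vertex, and such vertices are centers of disks touching two edges of $P$ (as noted in the proof of Proposition~\ref{p:computeside}). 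Routing $C_e$ once around $\bd F'$ so that it passes through a touching point of $F'$ before peeling off along $S_f$ to $t$ keeps the arc inside $P\supseteq F'$ and completes the argument; everything else is routine splicing, once the $C^1$ junction on $\bd\ldisk(\vecr)$ is spotted.
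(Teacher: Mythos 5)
Your proposal is correct and follows exactly the route the paper intends: the paper states this proposition without a separate proof, as a reformulation obtained by splicing a canonical start (Proposition~\ref{p:rl-enough}) with the CSCS characterization for boundary configurations (Proposition~\ref{p:reachside}), merging the two coincident arcs on $\ldisk(\vecr)$ into the second~C. Your additional step of replacing a disk $F\in\FIL(P)$ by one tangent to two edges of~$P$ (so that the last supporting disk genuinely touches $\bd P$ and the final arc can be routed through a touching point) fills in a detail the paper leaves implicit, and it is handled correctly.
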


\section{Putting everything together}
\label{s:together}

In this section, we show how to construct the reachable region
when~$\vecs$ is an arbitrary configuration in~$P$. We obtain it by
combining the results in sections~\ref{s:side} and~\ref{s:righleft}.
We will prove the following:
\begin{theorem}\label{th:main}
  Let $P$ be an $n$-sided, convex polygon, and let $\vecs$ be a
  configuration inside~$P$. Then the reachable region $\reach(\vecs)$
  from~$\vecs$ inside~$P$ is delimited by $O(n)$ arcs of unit circles,
  and we can compute $\reach(\vecs)$ in $O(n^2)$~time.
\end{theorem}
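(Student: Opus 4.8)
The plan is to combine the boundary-reachability machinery of Section~\ref{s:side} with the structural reduction of Proposition~\ref{p:rl-enough}. By Proposition~\ref{p:rl-enough}, a point $t$ is reachable from $\vecs$ if and only if $t \in \DA(\vecs)$, or $t$ is reachable by a canonical RL-start (respectively LR-start) followed by a left-turning (respectively right-turning) path. A canonical RL-start ends at a configuration $\vecr$ on the boundary of $P$, oriented counterclockwise, and the subsequent left-turning path is exactly a path of the type analyzed in Section~\ref{s:side}. Symmetrically for LR-starts. Hence
\[
  \reach(\vecs) \;=\; \DA(\vecs) \;\cup\; \bigcup_{\vecr \in \mathcal{C}_{RL}} \reach(\vecr) \;\cup\; \bigcup_{\vecr \in \mathcal{C}_{LR}} \reach(\vecr),
\]
where $\mathcal{C}_{RL}$ (resp.\ $\mathcal{C}_{LR}$) is the set of endpoint configurations of canonical RL-starts (resp.\ LR-starts) from $\vecs$, and each $\reach(\vecr)$ on the right-hand side is a boundary-reachable region, which by Proposition~\ref{p:computeside} equals $\bigcup_{F \in \BFIL(\vecr)} \LDA(F)$ and is delimited by $O(n)$ unit-circle arcs.

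\textbf{Bounding the complexity.} The key observation is that, although there are infinitely many canonical RL-starts (a one-parameter family of them reaching any given edge), their endpoint configurations, as $\vecr$ slides along a fixed edge $f$, form a monotone family exactly of the kind handled by Lemma~\ref{l:twoconf}. I would argue that along each edge $f$ the union $\bigcup_{\vecr} \reach(\vecr)$ over canonical-start endpoints on $f$ is governed, via Lemma~\ref{l:twoconf} and its reasoning, by a constant number of extremal configurations on $f$, so only $O(1)$ values of $\vecr$ per edge actually contribute to the boundary of the union. There being $n$ edges, only $O(n)$ boundary-reachable regions $\reach(\vecr)$ matter, each contributing $O(n)$ arcs; together with the $O(1)$-complexity set $\DA(\vecs)$, this naively gives $O(n^2)$ arcs. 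To get down to $O(n)$, I would observe that all these $\reach(\vecr)$ are unions of $\LDA(F)$ over disks $F$ that are either in $\FIL(P)$ (a set common to all of them, contributing a single convex region $\bigcap \FIL(P)$ whose complement has complexity $O(n)$) or in some left/right filling relative to $\vecr$; the latter disks are all tangent to the forward chain and contained in $P$ union one canonical disk, and I expect that over all choices of $\vecr$ these still form a family whose relevant members (vertices of an appropriate convex hull of centers, as in Observation~\ref{obs:convd} and the medial-axis argument) number $O(n)$ in total.

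\textbf{The algorithm.} To compute $\reach(\vecs)$ in $O(n^2)$ time: first compute $\DA(\vecs) = \LDA(\vecs) \cup \RDA(\vecs)$, which takes $O(n)$ time. Then, for each of the $n$ edges $f$ of $P$, determine the $O(1)$ extremal canonical-start endpoint configurations $\vecr$ on $f$ (solving a constant number of algebraic conditions describing the two-arc canonical start hitting $f$ tangentially, intersected with the Lemma~\ref{l:twoconf} extremal conditions), and for each such $\vecr$ invoke Proposition~\ref{p:computeside} to compute $\BFIL(\vecr)$ and hence $\reach(\vecr)$, delimited by $O(n)$ arcs, in $O(n)$ time. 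This is $O(n)$ work per edge for $O(1)$ configurations, i.e.\ $O(n^2)$ total to produce all the pieces. Finally, overlay all $O(n)$ of these $O(n)$-complexity regions together with $\DA(\vecs)$ and take their union. Because each region is a complement of a convex set intersected appropriately with unions of halfplane-like $\LDA(F)$ pieces, and there are $O(n)$ arcs in total once redundancy is removed, the union has complexity $O(n)$ and can be assembled in $O(n^2)$ time (e.g.\ by a plane-sweep, or by noting the pieces are "radially monotone" around suitable centers).

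\textbf{The main obstacle} I anticipate is the complexity bound, not the algorithm: showing that the contributions of canonical-start endpoints over \emph{all} $n$ edges, after removing redundancy via Lemma~\ref{l:twoconf}, collapse to a total of $O(n)$ arcs rather than $O(n^2)$. This requires a global argument — that the union of all the relevant $\LDA(F)$ regions, ranging over disks $F$ that are tangent to various forward chains and lie in $P$ union a canonical-start disk, has boundary complexity linear in $n$. I would prove this by reducing, via Observation~\ref{obs:convd}, to counting vertices of convex hulls of disk-center sets sitting on a bounded number of medial-axis-like structures, each of linear size, and arguing that the "outer envelope" of all these $\LDA$ regions is traced by $O(n)$ arcs because it is monotone with respect to the cyclic order of edges of $P$. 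Handling the interaction with $\bigcap\FIL(P)$ and the pockets (Lemma~\ref{l:pocket}, Lemma~\ref{l:nonreach}) cleanly — ensuring no point of $\bigcap\FIL(P)$ is erroneously included unless it is genuinely directly accessible or reached through pocket maneuvering — is the delicate bookkeeping part of the argument.
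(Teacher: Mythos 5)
Your high-level decomposition matches the paper's: by Proposition~\ref{p:rl-enough}, $\reach(\vecs)$ is $\DA(\vecs)$ together with what is reachable after the canonical starts, and each canonical start lands on a boundary configuration to which Proposition~\ref{p:computeside} applies. But two things go wrong after that. First, your premise that the canonical RL-starts reaching a given edge form ``a one-parameter family'' is false: as noted in Section~\ref{s:righleft}, for a fixed $\vecs$ there are at most \emph{two} canonical RL-starts ending on each edge (the second disk must be tangent to $\rdisk(\vecs)$ and to the edge), hence at most $2n$ canonical starts in total, and no reduction is needed at that stage. Second, and more seriously, the place where a reduction \emph{is} needed -- passing from the $O(n)$ canonical starts, each contributing via $\BFIL$ an $O(n)$-arc region, i.e.\ $O(n^2)$ arcs in total, down to the claimed $O(n)$ arcs -- is exactly where your argument dissolves into ``I would argue'' and ``I expect''. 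Your proposed mechanism (a global convex-hull-of-centers / outer-envelope monotonicity argument over all the left fillings relative to different starts) is not carried out, and you yourself flag it as the main obstacle; also note that Lemma~\ref{l:twoconf} bounds $\LDA(\vecs_2)$, not $\reach(\vecs_2)$, so invoking it to prune whole boundary-reachable regions per edge, as you do, is not justified by its statement.

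The paper closes this gap with a much more direct use of Lemma~\ref{l:twoconf}: collect the $O(n^2)$ boundary configurations produced by Proposition~\ref{p:computeside} over all $O(n)$ canonical starts (all of whose $\LDA$s, together with $\DA(\vecs)$, cover $\reach(\vecs)$), and then, \emph{per edge of $P$}, discard all but the first and the last of these configurations in the order $\before$; Lemma~\ref{l:twoconf} guarantees every discarded $\LDA$ is contained in the union of the two retained ones. This leaves $O(n)$ unit-circle arcs, each of which appears at most once on the boundary of $\reach(\vecs)$, and inserting them one by one by brute force gives the $O(n^2)$ running time. Without this (or an equivalent) pruning argument your proposal establishes only an $O(n^2)$ bound on the number of delimiting arcs, so the $O(n)$ complexity claim of the theorem is not proved.
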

\begin{proof}
  Let $t$ be a point in $\reach(\vecs)$. By
  Proposition~\ref{p:rl-enough}, either $t$ is in $\DA(\vecs)$, or it
  can be reached after a canonical start.  We only consider canonical
  RL-starts; the case of LR-starts can be handled symmetrically.

  The directly accessible region $\DA(\vecs)$ is delimited by two
  circle arcs, which can be computed in $O(n)$ time by brute force. We
  determine the at most $2n$~canonical RL-starts by brute force, in
  $O(n^2)$ time. For each canonical start, by
  Proposition~\ref{p:computeside}, we compute in $O(n)$ time a set of
  $O(n)$ configurations on the side of $P$ such that the union of
  their $\LDA$s form the reachable region after this canonical start.

  We have thus obtained a set of $O(n^2)$~configuration on the
  boundary of~$P$ such that the union of their $\LDA$s with
  $\DA(\vecs)$ is $\reach(\vecs)$.  By Lemma~\ref{l:twoconf}, we only
  need to keep two such configurations per edge: the first and the
  last one. As we have only $O(n)$ arcs to consider, we can construct
  $\reach(\vecs)$ by inserting these arcs one by one, and updating the
  reachable region by brute force.  As these arcs are arcs of unit
  circles, each one of them appears only once along the boundary of
  $\reach(\vecs)$. So overall, it takes $O(n^2)$ time.
\end{proof}

\section*{Acknowledgments}

This problem was first posed to us by Hazel Everett.  We miss her. We
also thank Sylvain Lazard, Ngoc-Minh L\^e, and Steve Wismath for
discussions on this problem.

%%%%%%%%%%%%%%%%%%%%%%%%%%%%%%%%%%%%%%%%%%%%%%%%%%%%%%%%%%%%%%%%%%%
%\bibliographystyle{geobook}
%\bibliography{dubins}

%%%%%%%%%%%%%%%%%%%%%%%%%%%%%%%%%%%%%%%%%%%%%%%%%%%%%%%%%%%%%%%%%%%
\end{document}